\newtheorem{definition}{Definition}
\newtheorem{theorem}{Theorem}
\newtheorem*{theorem*}{Theorem}
\newtheorem{lemma}[theorem]{Lemma}
\newtheorem{remark}[theorem]{Remark}
\newtheorem{claim}[theorem]{Claim}
\newtheorem{restate}{Theorem}
\DeclareMathOperator{\opt}{\textsc{Opt}}% cost of the offline optimum
\newcommand{\dist}{\ensuremath{\mathit{dist}}\xspace}
\newcommand{\cost}{\ensuremath{\mathit{cost}}\xspace}
\newcommand{\alg}{\ensuremath{\mathcal{A}}\xspace}
\newcommand{\ftp}{\textsc{FtP}\xspace}
\newcommand{\off}{\textsc{Off}\xspace}% offline algorithm
\newcommand{\R}{\ensuremath{\mathbb{R}}\xspace}
\newcommand{\N}{\ensuremath{\mathbb{N}}}
\newcommand{\E}{\ensuremath{\mathbb{E}}}
\newcommand{\algname}{\textsc{Trust\&Doubt}\xspace}
  \providecommand\BibTeX{{%
    \normalfont B\kern-0.5em{\scshape i\kern-0.25em b}\kern-0.8em\TeX}}}
\begin{document}

\title{Online metric algorithms with untrusted predictions\thanks{The authors would like to thank IGAFIT for the organization of the AlgPiE workshop which made this project possible.}}

\author{
Antonios Antoniadis\thanks{Supported by DFG Grant AN 1262/1-1.}\\
{\small University of Twente}
\and
Christian Coester\thanks{Supported by NWO VICI grant 639.023.812 of Nikhil Bansal.}\\
{\small University of Oxford}
\and
Marek Eli\'a\v{s}\thanks{Supported by ERC Starting Grant 759471 of Michael Kapralov.}\\
{\small Bocconi University}
\and
Adam Polak\thanks{Supported by National Science Center of Poland grant 2017/27/N/ST6/01334 and by Swiss National Science Foundation within the project \emph{Lattice Algorithms and Integer Programming} (185030).}\\
{\small Max Planck Institute for Informatics}\\
{\small Jagiellonian University}
\and
Bertrand Simon\thanks{Supported by the DFG Project Number 146371743 -- TRR 89 Invasive Computing.}\\
{\small IN2P3 Computing Center, CNRS}
}

\date{}

\maketitle

\begin{abstract}
  Machine-learned predictors, 
  although achieving very good results for inputs resembling
  training data, cannot possibly provide perfect
  predictions in all situations.  Still, decision-making systems
  that are based on such predictors need not only benefit from good
  predictions, but should also achieve a decent
  performance when the predictions are inadequate. 

  In this paper, we propose a prediction setup for arbitrary \emph{metrical task systems (MTS)}
  (e.g.,~\emph{caching},
  \emph{$k$-server} and \emph{convex body chasing}) and \emph{online matching on the line}.
  We utilize results from the theory of online algorithms to show how to make
  the setup robust.
  Specifically for caching, we present an algorithm whose performance, as a function of the prediction error, is exponentially better than what is achievable for general MTS.
  Finally, we present an empirical evaluation of our methods on real world datasets,
  which suggests practicality.
\end{abstract}

\section{Introduction}

Metrical task systems (MTS), introduced by \citet{BorodinLS92},
are a rich class containing several fundamental problems in online
optimization as special cases, including {\em caching}, {\em $k$-server},
{\em convex body chasing}, and {\em convex function chasing}.
MTS are
capable of modeling many problems arising in computing and production systems
\citep{SleatorT85,ManasseMS90},
movements of service vehicles \citep{DehghaniEHLS17,CoesterK19},
power management of embedded systems as well as data centers
\citep{IraniSG03,LinWAT13},
and are also related to the {\em experts} problem
in online learning, see \citep{DanielyM19,BlumB00}.

Initially, we are given a metric space $M$ of {\em states},
which can be interpreted for example as actions, investment strategies, or configurations
of some production machine.
We start at a predefined initial state $x_0$.
At each time $t = 1, 2, \dotsc$, we are presented with
a {\em cost function} $\ell_t\colon M \to \R^+ \cup \{0,+\infty\}$ and our task is to decide
either to stay at $x_{t-1}$ and pay the cost $\ell_t(x_{t-1})$, or to move
to some other (possibly cheaper) state $x_t$ and pay
$\dist(x_{t-1}, x_t) + \ell_t(x_t)$, where $\dist(x_{t-1}, x_t)$ is the cost
of the transition between states $x_{t-1}$ and $x_t$. 
The objective is to minimize the overall cost incurred over time.

Given that MTS is an online problem, one needs to make each decision without any information about the future
cost functions. This makes the problem substantially difficult, as
supported by strong lower bounds for general MTS
\citep{BorodinLS92} as well as for many special MTS problems
\citep[see e.g.][]{KarloffRR94, FiatFKRRV98}.
For the recent work on MTS, see \citet{BubeckCLL19,CoesterL19,BubeckR19}.

In this paper, we study how to utilize predictors (possibly based on
machine learning) in order to decrease the uncertainty about the future
and achieve a better performance for MTS.
We propose a natural prediction setup for MTS and show how to develop
algorithms in this setup with the following properties of
\emph{consistency} (i), \emph{smoothness} (ii),
and \emph{robustness} (iii).
\begin{enumerate}
\item[(i)] Their performance with perfect predictions is close to optimal.
\item[(ii)] With decreasing accuracy of the predictions,
their performance deteriorates smoothly as a function of the prediction error.
\item[(iii)] When given poor predictions, their performance is comparable
to that of the best online algorithm which does not use predictions.
\end{enumerate}

Caching and weighted caching problems, which are special cases of MTS,
have already been studied
in this context of utilizing
predictors \cite{LykourisV18,Rohatgi20,AlexWei20,JiangPS20,BansalCKPV20}.
However, the corresponding prediction setups do not seem applicable to
general MTS. For example,
algorithms by \citet{LykourisV18}
and \citet{Rohatgi20} provide similar guarantees by using predictions
of the next reoccurrence time of the current page in the input sequence.
However, as we show in this paper, such predictions are not useful for more
general MTS: even for weighted caching, they do not help to improve upon the
bounds achievable without predictions unless additional assumptions
are made (see \cite{BansalCKPV20} for an example of such an assumption).

We propose a prediction setup based on {\em action predictions}
where, at each time step, the predictor tries to predict the action that an offline algorithm would have taken. We can view these predictions as recommendations of what our algorithm should do.
We show that using this prediction setup, we can achieve
consistency, smoothness, and robustness for any MTS.
For the (unweighted) caching problem, we develop an algorithm that
obtains a better dependency on the prediction error than our general
result, and whose performance in empirical tests 
is either better or comparable to the algorithms by
\citet{LykourisV18} and \citet{Rohatgi20}.
This demonstrates the flexibility of our setup.
We would like to stress
that specifically for the caching problem, the action predictions can be
obtained by simply converting the reoccurrence time predictions used in
\citep{LykourisV18,Rohatgi20,AlexWei20}, a feature that we use in order to
compare our results to those previous algorithms.
Nevertheless our
prediction setup is applicable to the much broader context of MTS. We
demonstrate this and suggest practicability of our algorithms also for MTS other than caching by providing experimental results for the {\em
  ice cream} problem~\citep{chrobak98}, a simple example of an MTS. Finally, we extend our theoretical result beyond MTS to {\em
  online matching on the line}.

\paragraph{Action Predictions for MTS.}
At each time $t$, the predictor proposes an {\em action}, i.e., a state $p_t$ in the metric space $M$.
We define the \emph{prediction error} with respect to some offline algorithm \off as
\begin{equation}\label{eq:eta-def}
\eta = \sum_{t=1}^T \eta_t;\quad \eta_t = \dist(p_t, o_t),
\end{equation}
where $o_t$ denotes the state of \off at time $t$ and $T$ denotes the length
of the input sequence.

The predictions could be, for instance, the output of a machine-learned model
or actions of a heuristic
which tends to produce good solutions in practice, but possibly without a theoretical guarantee. The offline algorithm \off can be an optimal one, but also other options are plausible.
For example, if the typical instances are
composed of subpatterns known from the past and for which
good solutions are known, then \off could also be a near-optimal
algorithm which composes its output from the partial solutions to the
subpatterns.
The task of the predictor in this case is to anticipate which subpattern
is going to follow and provide the precomputed solution to that
subpattern.
In the case of the caching problem, as mentioned above and
explained in
Section~\ref{sec:setup_prop}, we can actually convert the reoccurrence predictions
\citep{LykourisV18,Rohatgi20,AlexWei20} into action predictions.

Note that,
even if the prediction error with respect to \off is low,
the cost of the solution composed from the predictions $p_1, \dotsc, p_T$
can be much higher than the cost incurred by \off, since $\ell_t(p_t)$
can be much larger than $\ell_t(o_t)$ even if $\dist(p_t,o_t)$ is small.
However, we can design algorithms which use such predictions
and achieve a good performance
whenever the predictions have small error with respect to any
low-cost offline algorithm.
We aim at expressing the performance of the prediction-based algorithms
as a function of $\eta/\off$,
where (abusing notation) $\off$ denotes the cost of the offline
algorithm. This is to avoid scaling issues: if the offline algorithm
incurs movement cost 1000, predictions with total error $\eta=1$ give us a
rather precise estimate of its state, unlike when $\off = 0.1$.

\paragraph{Caching Problem.} In the caching problem we have a two-level computer memory, out of which the
fast one (cache) can only store $k$ pages. We need to answer a
sequence of requests to pages. Such a request requires no action
and incurs no cost if the page is already in the cache, but otherwise a \emph{page fault} occurs and
 we have to add the page and evict some other page at a cost of
$1$. Caching can be seen as an MTS whose states are cache
configurations\footnote{A cache configuration is the set of pages in cache.}. Therefore, also the predictions are cache configurations in our setup, but as we discuss in Section~\ref{sec:setup_prop} they can be encoded much more succinctly than specifying the full cache content in each time step. The error $\eta_t$ describes in this case the number of pages on which the predicted cache and the cache of $\off$ differ at time $t$.

\subsection{Our results}
We prove two general theorems providing robustness and consistency guarantees for any MTS.

\begin{theorem}\label{thm:ub-det}
Let $A$ be a deterministic $\alpha$-competitive online algorithm for a problem $P$
belonging to MTS. Given action predictions for $P$, there is a deterministic algorithm achieving competitive ratio
\[ 9 \cdot \min\{\alpha,\; 1+4\eta/\off\} \]
against any offline algorithm $\off$, where $\eta$ is the prediction error with respect to $\off$.
\end{theorem}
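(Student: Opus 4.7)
The plan is to prove the theorem in two stages: first design a ``follow the prediction'' algorithm \ftp that is $(1+O(\eta/\off))$-competitive against \off, and then black-box combine \ftp with the given $\alpha$-competitive algorithm $A$ via a deterministic combining scheme for MTS that loses only a constant factor. Taking the minimum over the two resulting guarantees yields the stated bound.

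For the first stage, I would define \ftp so that at each step it moves to
\[
  a_t \;:=\; \arg\min_{x\in M}\, \bigl\{\dist(a_{t-1},x) + \ell_t(x) + \dist(x,p_t)\bigr\},
\]
i.e.\ it greedily balances the usual MTS cost (movement plus service) against a ``pull'' toward the current prediction. Note that a naive $a_t=p_t$ does not work in general, because $\ell_t(p_t)$ may be infinite even if $\dist(p_t,o_t)$ is tiny. To analyze \ftp I would use the potential $\Phi_t := \dist(a_t,p_t)$. Substituting $x=o_t$ into the defining minimization and then bounding $\dist(a_{t-1},o_t) \le \Phi_{t-1} + \eta_{t-1} + \dist(o_{t-1},o_t)$ by the triangle inequality gives the step-wise estimate
\[
  \dist(a_{t-1},a_t) + \ell_t(a_t) + \Phi_t - \Phi_{t-1} \;\le\; \dist(o_{t-1},o_t) + \ell_t(o_t) + \eta_{t-1} + \eta_t,
\]
whose right-hand side is the cost of \off at step $t$ plus at most $\eta_{t-1}+\eta_t$. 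Telescoping over $t$ (using $\Phi_0=0$ and $\Phi_T\ge 0$, with $p_0=o_0=x_0$ so that $\eta_0=0$) yields $\cost(\ftp) \le \cost(\off) + O(\eta)$, and hence a competitive ratio at most $1+4\eta/\off$; the constant $4$ is ample slack and just makes the next step more convenient.

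For the second stage I would invoke a classical deterministic scheme for combining two online MTS algorithms: maintain the two ghosts $A$ and \ftp in simulation and have the real algorithm reside at the position of whichever ghost has the smaller cumulative cost so far, switching to the other ghost only when that other ghost's cost becomes smaller by a fixed multiplicative factor. Every switch costs the current metric distance between the two ghost states, which can be amortized against the extra cost accumulated by the more expensive ghost since the previous switch. A careful amortized analysis shows that the combined cost is at most $9\cdot\min\{\cost(A),\cost(\ftp)\}$; combining this with $\cost(A)\le \alpha\cdot\off$ and $\cost(\ftp)\le(1+4\eta/\off)\cdot\off$ yields the theorem.

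The main technical obstacle is the combining step: in a general metric the two ghosts can be arbitrarily far apart, so the scheme has to delay switches enough to amortize them but not so much that it fails to track the better ghost. Getting the explicit constant $9$ therefore requires a delicate amortized/potential argument; by contrast, the \ftp analysis is a clean telescoping that uses only the triangle inequality and the defining minimization of $a_t$.
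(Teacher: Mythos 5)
Your proposal is correct and reaches the same conclusion as the paper, but via a genuinely different route for the key lemma, so let me compare.

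\paragraph{The \ftp step.} Your definition of \ftp differs from the paper's. You set
$a_t := \arg\min_{x}\{\dist(a_{t-1},x) + \ell_t(x) + \dist(x,p_t)\}$,
which depends on the algorithm's previous state, whereas the paper uses the memoryless rule
$x_t := \arg\min_x\{\ell_t(x) + 2\dist(x,p_t)\}$.
Your analysis via the potential $\Phi_t=\dist(a_t,p_t)$ is also different from the paper's, which compares hybrid algorithms $A_t$ (agree with \ftp for the first $t$ steps, then with $\off$) and shows $\cost(A_t)\le\cost(A_{t-1})+4\eta_t$ via two applications of the triangle inequality and the defining optimality of $x_t$. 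Both arguments are clean; your telescoping is perhaps slightly more transparent and in fact gives the sharper bound $\cost(\ftp)\le\off+2\eta$ rather than $\off+4\eta$, a point you correctly note. Either constant suffices for the claimed theorem statement.

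\paragraph{The combining step.} Here you and the paper rely on the same classical ingredient, the Fiat--Rabani--Ravid deterministic combiner ($MIN^{det}$, Theorem~\ref{thm:simulation_deterministic}), and the constant $9$ comes from plugging $m=2$, $\gamma=2$ into $\frac{2\gamma^m}{\gamma-1}+1$. Your verbal description of the scheme is, however, not quite right: you describe it as ``reside at the ghost with smaller cumulative cost, switch when the other becomes smaller by a multiplicative factor,'' but the two halves of that sentence contradict each other (costs are nondecreasing, so the currently more expensive ghost cannot later have the smaller cost), and in any case the FRR scheme is not a greedy-with-hysteresis rule but a cow-path / geometric-budget rule: follow ghost $\ell\bmod m$ until its cumulative cost exceeds $\gamma^\ell$, then increment $\ell$. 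Since you invoke the existence of the $9$-factor combiner rather than deriving it, this imprecision does not affect the validity of the overall argument, but be aware that the variant you sketched would require its own (and likely different) analysis.

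\end{document}
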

Roughly speaking, the competitive ratio (formally defined in Section~\ref{sec:prelim}) is the worst case ratio between the cost of two algorithms. If $\off$ is an optimal algorithm, then the expression in the theorem is the overall competitive ratio of the prediction-based algorithm.

\begin{theorem}\label{thm:ub-rand}
Let $A$ be a randomized $\alpha$-competitive online algorithm for an MTS $P$
with metric space diameter $D$.
For any $\epsilon \leq 1/4$, given action predictions for $P$ there is a randomized algorithm achieving cost
\[ (1+\epsilon) \cdot \min\{\alpha, 1+4\eta/\off\}\cdot\off + O(D/\epsilon), \]
where $\eta$ is the prediction error with respect to an offline algorithm \off.
Thus, if $\off$ is \mbox{(near-)}optimal and
$\eta \ll \off$, the competitive ratio is close to $1+\epsilon$.
\end{theorem}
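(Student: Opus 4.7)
The approach parallels that of \Cref{thm:ub-det}, with the factor-$9$ deterministic combiner replaced by a $(1+\epsilon)$-approximate randomized combiner for MTS. The plan is to (a)~build a naive \emph{Follow-the-Prediction} algorithm \ftp whose cost is at most $(1+4\eta/\off)\off$, and (b)~use a randomized meta-algorithm to combine \ftp with the black-box $\alpha$-competitive algorithm $A$, losing only a $(1+\epsilon)$ factor and an additive $O(D/\epsilon)$.

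First I would define \ftp as the online algorithm that at each time $t$ moves from $p_{t-1}$ to $p_t$ and serves $\ell_t$ via a cheapest detour if needed. Summing $\dist(p_{t-1},p_t) \le \eta_{t-1} + \dist(o_{t-1},o_t) + \eta_t$ over all $t$ (using $\eta_0 = 0$) bounds the movement cost of \ftp by $\mathrm{cost}_{\mathrm{mov}}(\off) + 2\eta$. A parallel argument, combined with the standard reduction that one may assume \textsc{wlog} that MTS cost functions are $2$-Lipschitz (otherwise the algorithm detours through a cheapest point), gives $\ell_t(p_t) \le \ell_t(o_t) + 2\eta_t$ and hence bounds \ftp's service cost by $\mathrm{cost}_{\mathrm{serv}}(\off) + 2\eta$. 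Together,
\[
\mathrm{cost}(\ftp) \;\le\; \mathrm{cost}(\off) + 4\eta \;=\; (1+4\eta/\off)\,\off.
\]

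Next, I would invoke a randomized two-expert combiner for MTS (for instance that of Blum--Burch, or a direct adaptation of a fractional fixed-share scheme). Given any two online algorithms $A_1, A_2$ for an MTS of diameter $D$ and any $\epsilon \le 1/4$, such a combiner produces a randomized algorithm whose expected cost is at most
\[
(1+\epsilon)\min\{\mathrm{cost}(A_1),\mathrm{cost}(A_2)\} + O(D/\epsilon).
\]
Applying it with $A_1 = A$ and $A_2 = \ftp$, and substituting $\E[\mathrm{cost}(A)] \le \alpha\cdot\off$ together with the \ftp bound above into the minimum, yields exactly the claimed guarantee.

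The main obstacle will be realizing the combiner with the advertised constants --- the $(1+\epsilon)$ multiplicative factor (in contrast to the deterministic ``$9$'') and the additive $O(D/\epsilon)$ overhead --- and verifying that it correctly handles a mix of a deterministic (\ftp) and a randomized ($A$) expert. The standard approach is to maintain exponential weights $w_i \propto \exp(-c\cdot \mathrm{cost}_i)$ over the two experts and to couple the algorithm's state distribution to the weighted mixture via an online optimal transport, paying the expected metric movement whenever the weights shift. Since the number of experts is only two, the usual $\log n$ factor in the additive term disappears. This technique is by now routine in the metric-experts/MTS literature, so once the \ftp bound is in place, plugging it into such a combiner closes the proof.
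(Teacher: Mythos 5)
Your plan is correct and matches the paper's top-level structure: establish $\cost(\ftp)\le\cost(\off)+4\eta$, then apply the Blum--Burch randomized combiner (Theorem~\ref{thm:simulation_randomized}) to \ftp and $A$, with $m=2$ so the $\ln m$ disappears into the $O(D/\epsilon)$ constant. The one place you take a different route is the proof of the \ftp bound. The paper defines $x_t=\arg\min_x\{\ell_t(x)+2\dist(x,p_t)\}$ and analyzes \ftp through a sequence of hybrid algorithms $A_t$ (follow \ftp for the first $t$ steps, then switch to \off), showing $\cost(A_t)\le\cost(A_{t-1})+4\eta_t$ and telescoping. You instead split the cost into a virtual movement part $\sum_t\dist(p_{t-1},p_t)$, bounded by \off's movement plus $2\eta$ via the triangle inequality, and an effective service part $\sum_t\min_y\{\ell_t(y)+2\dist(y,p_t)\}$, bounded by \off's service plus $2\eta$ via $2$-Lipschitzness of the detour envelope. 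These are equivalent analyses of the same algorithm --- the paper's $x_t$ is precisely the envelope's minimizer --- but your decomposition attributes $2\eta$ to each of movement and service separately, which is arguably more transparent, while the paper's telescoping avoids reasoning about the virtual trajectory through the $p_t$'s, which the algorithm never actually visits (its true state is the detour point). One presentational nit: as stated, ``moves from $p_{t-1}$ to $p_t$ and serves $\ell_t$ via a cheapest detour'' is not a valid MTS move sequence, since the model does not permit zigzagging within one step; this is repaired by making the $p_t$'s explicitly virtual and charging each detour distance $\dist(p_t,x_t)$ twice (once into, once out), which is exactly what the factor of $2$ in the envelope already encodes. Your cost accounting implicitly does this correctly, but it is worth saying so.
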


We note that the proofs of these theorems are based on the powerful results by
\citet{FiatRR94} and \citet{BlumB00}.
In Theorem~\ref{thm:generalLB},
we show that
the dependence on $\eta/\off$ in the preceding theorems is tight up to constant factors for some MTS instance.

For some specific MTS, however, the dependence on $\eta/\off$ can be
improved, as shown
in Section~\ref{sec:paging}, where we present a new algorithm for caching
whose competitive ratio has a logarithmic dependence on $\eta/\off$.
One of the main characteristics of our algorithm, which we call \algname, compared to previous approaches, is that it is able to \emph{gradually} adapt its level of trust in the predictor throughout the instance. {Showing that our general prediction setup can be used to design such efficient algorithms for caching is the most involved result of our paper, so the following result is proved before Theorems~\ref{thm:ub-det} and \ref{thm:ub-rand}.}
\begin{theorem}\label{thm:paging}
For caching with action predictions, there is a randomized algorithm with competitive ratio
$\min\big\{f(\frac\eta\off), g(k)\big\}$ against any algorithm $\off$,
where $f(\frac\eta\off) \leq O(\log \frac\eta\off)$ is
the smoothness with prediction error $\eta$ and
$g(k) \leq O(\log(k))$ is the robustness
with cache size $k$.
\end{theorem}
{We do not attempt to optimize constant factors in the proof of Theorem~\ref{thm:paging}, but we remark that $f$ can be chosen such that $f(0)= 1+\epsilon$, for arbitrary $\epsilon>0$. The reason is that our algorithm in the proof of Theorem~\ref{thm:paging} can be used as algorithm $A$ in Theorem~\ref{thm:ub-rand}.}

Although we designed our prediction setup with MTS in mind, it can 
also be applied to problems beyond MTS.
We demonstrate this in Section~\ref{sec:matching} by employing our techniques to provide an algorithm of similar flavor for
\emph{online matching on the line}, a problem not known to be an MTS.
\begin{theorem}\label{thm:matching}
For online matching on the line with action predictions, there is a deterministic algorithm with competitive ratio
$O(\min\{\log n, 1 + \eta/\off\})$, where $\eta$ is the prediction error with respect to some
offline algorithm \off.
\end{theorem}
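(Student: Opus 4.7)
The plan is to mirror the strategy of \Cref{thm:ub-det}: design a ``follow-the-prediction'' procedure \ftp whose cost scales like $\off+O(\eta)$, combine it with a robust $O(\log n)$-competitive algorithm \alg for online matching on the line (e.g., the algorithm of Raghvendra or Nayyar--Raghvendra), and argue that the combination attains, up to a constant factor, the better of the two guarantees.

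I would first define \ftp by committing each request $r_t$ to the still-unmatched server closest to $p_t$. To bound the resulting matching $\mu$ against the offline matching $\mu^*$ (with $\mu^*(r_t)=o_t$), I would decompose the symmetric difference $\mu\triangle\mu^*$ into alternating cycles on the line and use a swap argument. The triangle inequality gives $|r_t-\mu(r_t)|\le|r_t-o_t|+|o_t-p_t|+|p_t-\mu(r_t)|$, and summing yields $\off+\eta+\sum_t|p_t-\mu(r_t)|$; the third, ``conflict'' term can then be bounded by $O(\eta)$ by charging each displaced assignment to the prediction errors along the alternating cycle responsible for the conflict, exploiting the linear order of points on the line.

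Second, I would combine \ftp with \alg. Unlike in MTS, a matching commitment is irreversible and cannot be undone at the price of a metric move, so the Fiat--Rabani--Ravid combining used in \Cref{thm:ub-det} does not transfer verbatim. I propose a thresholded switch: run \ftp as long as its cumulative cost stays within a geometric (doubling) multiple of an online lower bound on $\off$, and switch to \alg on the remaining requests and servers once the threshold is crossed. For the post-switch phase to retain the $O(\log n)$ guarantee, one needs the robust algorithm to remain $O(\log n)$-competitive when initialized from an arbitrary partial matching, which holds for standard line-matching algorithms.

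The main obstacle is proving that this thresholded scheme achieves $O(\min\{\log n,1+\eta/\off\})$ against any offline \off. Since $\eta$ is measured against the same \off that \alg is compared to, and \off is revealed only in hindsight, the threshold must be calibrated from an online lower bound on \off; the resulting doubling argument is analogous to the one used in the proofs of \Cref{thm:ub-det,thm:ub-rand}, with the bulk of the technical work going into bounding the ``switchover'' cost incurred when \ftp's early commitments block servers that \alg would otherwise have matched in the second phase.
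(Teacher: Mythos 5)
Your high-level plan --- a follow-the-prediction procedure combined with a robust $O(\log n)$-competitive algorithm --- is the right one, but both of your key technical steps diverge from the paper's proof, and each contains a real gap.

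First, the combination. You assert that the Fiat--Rabani--Ravid $MIN^{det}$ scheme ``does not transfer verbatim'' because matching commitments are irrevocable, and propose a calibrated threshold switch instead. The paper's central insight is the opposite: $MIN^{det}$ \emph{does} transfer. Lemma~\ref{lem:distance} exists precisely to certify the one hypothesis $MIN^{det}$ needs, namely that the switching cost between any two simulated algorithms' configurations is bounded by the sum of their current costs; once that holds, Theorem~\ref{thm:simulation_deterministic_appendix} applies directly, with no need to know or estimate $\off$ online. The paper makes the ``switch'' meaningful by tracking a virtual server-configuration and a potential equal to $\dist$ between the virtual and actual configurations: the physical edges are irrevocable, but the virtual configuration is not, and the potential absorbs the discrepancy. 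Your alternative raises two problems you leave unresolved: (a) it requires the robust line-matching algorithm to stay $O(\log n)$-competitive when started from an arbitrary partial matching, which you flag as an assumption but is not obviously true (once some servers are blocked, the remaining subinstance's optimum can far exceed the original's, so the guarantee does not automatically compose); the paper sidesteps this entirely by always simulating both algorithms from scratch and only paying a potential change. And (b) your threshold is calibrated against ``an online lower bound on $\off$'' --- but $\off$ is an arbitrary, possibly non-optimal offline benchmark, and there is no online proxy for its cost; $MIN^{det}$ avoids this because its thresholds depend only on the simulated algorithms' own costs.

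Second, your \ftp. You predict a single server $p_t$ and commit $r_t$ to the nearest still-unmatched server to $p_t$. The paper works in a different prediction model that seems necessary here: the prediction at step $i$ is a \emph{set} $P_i$ of $i$ servers, and \ftp uses the potential $\Phi_i=\dist(S_i,P_i)$ to choose which server to commit, yielding the exact bound $\cost(\ftp)\le\off+2\eta$ without any cycle decomposition. Your ``nearest unmatched'' rule plus alternating-cycle swap argument is only sketched, and the claimed bound $\sum_t|p_t-\mu(r_t)|=O(\eta)$ is the entire crux of the analysis. Nearest-unmatched assignments on the line can cascade badly (greedy line matching is $\Theta(2^n)$-competitive), so the conflict term needs an actual proof; as written this is a gap, and the paper's set-valued predictions with potential-function bookkeeping are exactly the device that makes the bound go through cleanly rather than requiring a delicate charging scheme.
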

We also show that Theorem~\ref{thm:matching} can be generalized to
give a $O(\min\{2n-1, \eta/\off\})$-competitive algorithm for \emph{online
metric bipartite matching}.

In Section~\ref{sec:limitations}, we show that the reoccurrence time predictions
introduced by \citet{LykourisV18} for caching
do not help for more general MTS.

\begin{theorem}\label{thm:limitations}
The competitive ratio of any algorithm for weighted caching
even if provided with precise reoccurrence time predictions
is $\Omega(\log k)$.
\end{theorem}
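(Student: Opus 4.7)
The plan is to prove the lower bound via Yao's minimax principle, by exhibiting a distribution $\mathcal{D}$ over weighted-caching instances --- each sample fixing both a request sequence and the induced exact next-arrival-time predictions --- on which every deterministic online algorithm has expected cost $\Omega(\log k)\cdot \mathbb{E}_\mathcal{D}[\opt]$.

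The distribution should be built so that two properties hold simultaneously. First, the next-arrival-time predictions at every step should be a deterministic function of the past (and hence carry no information beyond what the algorithm already knows from $\mathcal{D}$). A natural route is a round-robin backbone that cycles through all $k+1$ pages in a fixed rotation, so that the next occurrence of each page is always the same known offset away and the predictions are merely a restatement of a publicly known schedule. Second, the pages should have carefully chosen (e.g.\ geometrically spaced) weights, and $\mathcal{D}$ should hide a long-range structural parameter --- for instance, a uniformly random labeling of pages to weight classes, or a random choice among $\Theta(k)$ ``favored'' cache configurations across macro-phases --- that the offline optimum exploits but that the one-step-per-page lookahead provided by the predictions does not expose.

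Given such a distribution, the $\Omega(\log k)$ gap would then follow by reducing to the classical randomized $\Omega(\log k)$ lower bound for weighted caching (equivalently, $k$-server on a weighted-star metric), checking that the hard instance used there has, or can be modified to have, the deterministic-prediction property above. Since the algorithm's only genuine uncertainty is then the hidden parameter, the argument reduces to an experts-style lower bound on $\log k$ weight classes, which matches the $\log k$ competitive ratio of randomized weighted caching without predictions.

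The main obstacle is precisely the tension between the two requirements: the standard randomized lower bounds for caching --- both weighted and unweighted --- place their randomness in the raw request sequence, which would be entirely revealed by precise next-arrival-time predictions, collapsing the lower bound to $O(1)$ (as indeed happens for \emph{unweighted} caching via B\'el\'ady). The technical heart of the proof is therefore to hide the adversary's randomness somewhere the predictions cannot reach: in a parameter that has zero influence on the next-arrival times of pages but an $\Omega(\log k)$-large influence on which cache configuration is cheap in hindsight. This is the step that genuinely uses the distinction between weighted and unweighted caching --- namely, that next-arrival-time information is not sufficient for offline optimality when page weights differ.
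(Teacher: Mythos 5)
You correctly identify the central obstacle --- that the standard randomized lower bounds for (weighted) caching place their randomness in the request sequence itself, which perfect next-arrival predictions would immediately reveal --- and your high-level goal of making the predictions ``a deterministic function of the past'' while preserving hidden adversarial randomness is exactly the right target. However, the concrete route you sketch cannot be completed. A fixed round-robin backbone makes the \emph{entire} request sequence deterministic and publicly known, so there is no residual randomness left for the adversary; and your two candidate hiding places do not exist in the caching model. Page weights are part of the instance specification and are known to the algorithm up front, so a ``uniformly random labeling of pages to weight classes'' is revealed at time zero; and ``which cache configuration is cheap in hindsight'' is determined entirely by the request sequence and the weights, so if both of these are deterministic a ``random choice of favored configuration'' cannot affect anyone's cost. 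In caching, the only place adversarial randomness can live is in the requests --- the proof must make the request sequence genuinely random while simultaneously ensuring the next-arrival predictions leak nothing.

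The paper achieves this with a timing trick that your proposal is missing. It builds a recursive random sequence $\sigma_h$ over a superincreasing weighted star (in the spirit of the Karloff--Rabani--Ravid lower bound, as presented by Lee): each ``iteration'' at level $h$ is, with probability $1/2$, either a single sweep $(0,1,\dots,h)$ (type~1) or a long concatenation of independent copies of $\sigma_{h-1}$ (type~2). To neutralize the predictions, the adversary pre-samples a phantom type~2 iteration and uses \emph{its} arrival times to schedule the type~1 requests, so that at every moment the predicted next-arrival time of every page is identical in distribution (indeed, identical as a value given the shared randomness) regardless of which type the upcoming iteration actually is, and the ordering by next arrival is always $0,1,\dots,k$. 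The semi-online algorithm therefore faces the same $1/2$--$1/2$ uncertainty as a prediction-free algorithm at every decision point, and the standard $\Omega(\log k)$ potential argument for the superincreasing star goes through. This scheduling-to-decorrelate idea --- keeping the sequence random but making the prediction stream uninformative about the upcoming branch --- is the genuinely new step, and it is absent from your proposal.
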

Note that there are $O(\log k)$-competitive online algorithms for
weighted caching which do
not use any predictions~\citep[see][]{BansalBN12}. This motivates the need for a different prediction setup as introduced in this paper.
This lower bound result has been obtained independently by \citet{JiangPS20} who also
proved a lower bound of $\Omega(k)$ for deterministic algorithms with precise
reoccurrence time predictions.
However, for instances with only $\ell$ weight classes,
\citet{BansalCKPV20} showed that perfect reoccurrence time predictions allow
achieving a competitive ratio of $\Theta(\log \ell)$.

We round up by presenting an extensive experimental evaluation
of our results that suggests practicality. We test the performance of
our algorithms on public data with previously used models. With respect to
caching, our algorithms outperform all previous approaches in most
settings (and are always at least comparable). A very interesting use of our
setup is that it allows us to employ any other online algorithm as a predictor for our algorithm. For instance, when using the Least Recently Used (LRU) algorithm -- which is considered the gold standard in practice -- as a predictor for our algorithm, our experiments suggest that we achieve the same practical performance as LRU, but with an exponential improvement in the theoretical worst-case guarantee
($O(\log k)$ instead of $k$). Finally we
applied our general algorithms to a simple MTS called the ice cream
problem and were able to obtain results that also suggest practicality
of our setup beyond caching.

\subsection{Related work}
Our work is part of a larger and recent movement to prove
rigorous performance guarantees for algorithms based on machine
learning. 
The first main results have been established on both
classical \citep[see][]{KraskaBCDP18, KhalilDNAS17}
and online problems: \citet{LykourisV18} and \citet{Rohatgi20} on caching,
\citet{LattanziLMV20} on restricted assignment scheduling,
\citet{PurohitSK18} on ski rental and non-clairvoyant scheduling,
\citet{GollapudiP19} on ski rental with multiple predictors,
\citet{Mitzenmacher20} on scheduling/queuing, and
\citet{MedinaV17} on revenue optimization.

Most of the online results are analyzed by means of
\emph{consistency} (competitive ratio in the case of perfect
predictions) and \emph{robustness} (worst-case competitive-ratio
regardless of prediction quality), which was first defined in this
context by \citet{PurohitSK18}, while \citet{Mitzenmacher20} uses a different
measure called \emph{price of misprediction}.
It should be noted that the exact definitions of consistency and robustness are
slightly inconsistent between different works in the literature, making it
often difficult to directly compare results.

\paragraph{Results on Caching.}
Among the closest results to our work are the ones by \citet{LykourisV18} and \citet{Rohatgi20}, who study the caching problem (a special case of MTS) with machine learned predictions.
\citet{LykourisV18} introduced the following prediction setup for caching:
whenever a page is requested, the algorithm receives a prediction of the time when the same page will be requested again. The prediction error is defined as the $\ell_1$-distance between the predictions and the truth, i.e., the sum -- over all requests -- of the
absolute difference between the predicted and the real reoccurrence time of the same request.
For this prediction setup, they adapted the classic Marker algorithm in order
to achieve, up to constant factors, the best robustness and consistency
possible.
In particular, they achieved a competitive ratio of
$O\big(1+\min \{ \sqrt{\eta/\opt},\log k \}\big)$ and their
algorithm was shown to perform well in experiments.
Later, \citet{Rohatgi20} achieved a better dependency on the prediction error:
$O\big(1+\min \big\{ \frac{\log k}{k} \frac{\eta}{\opt},\log k \big\}\big)$.
He also provides a close lower bound.

Following the original announcement of our work, we learned about further
developments by \citet{AlexWei20} and \citet{JiangPS20}.
\citet{AlexWei20}
{further refined the aforementioned results for caching with reoccurrence time predictions.}
The paper by \citet{JiangPS20} proposes an algorithm for weighted caching
in a very strong prediction setup, where
the predictor reports at each time step the reoccurrence
time of the currently requested page as well as
\emph{all} page requests up to that time. \citet{JiangPS20} provide a collection of
lower bounds for weaker predictors
(including an independent proof of Theorem~\ref{thm:limitations}),
justifying the need for such a strong predictor.
In a followup work, \citet{BansalCKPV20} showed, though, that the reoccurrence time
predictions\footnote{{Their actual algorithm only needs the relative ordering of reoccurrence times, which is also true for~\cite{LykourisV18,Rohatgi20,AlexWei20}.}} are still useful for weighted caching when the number $\ell$ of weight classes is small,
allowing to achieve a competitive ratio of $\Theta(\log \ell)$ for good predictions.

We stress that the aforementioned results use different prediction setups and
they do not imply any bounds for our setup. This is due to a different
way of measuring prediction errors, see Section~\ref{sec:setup_prop} for details. Therefore, we cannot compare the theoretical guarantees achieved by
previously published caching algorithms in their prediction setup to our new caching
algorithm within our broader setup.
Instead, we provide a comparison via experiments.\footnote{One might be tempted to adapt the algorithm of \citet{Rohatgi20}
to action
predictions by replacing the page with the furthest predicted reoccurrence
in the algorithm of \citet{Rohatgi20} by a page evicted by the predictor in our setting.
However, it is not hard, following ideas similar to the first example about prediction errors in the Section~\ref{sec:setup_prop}, to construct an instance where this algorithm is $\Omega(\log k)$-competitive although $\frac{\eta}{\opt}=O(1)$ in our setup.}

\paragraph{Combining Worst-Case and Optimistic Algorithms.}
An approach in some ways similar to ours was developed by
\citet{MahdianNS12}, who assume the existence of an optimistic
algorithm and developed a meta-algorithm that combines this algorithm
with a classical one and obtains a competitive ratio that is an
interpolation between the ratios of the two algorithms.  They designed
such algorithms for several problems including facility location and
load balancing. The competitive ratios obtained depend on the
performance of the optimistic algorithm and the choice of the
interpolation parameter. Furthermore the meta-algorithm is designed on
a problem-by-problem basis. In contrast, (i) our performance guarantees
are a function of the prediction error, (ii) generally we are able to
approach the performance of the best algorithm, and (iii) our way of
simulating multiple algorithms can be seen as a black box and is
problem independent.  

\paragraph{Online Algorithms with Advice.}
Another model for augmenting online algorithms, but not directly related to
the prediction setting studied in this paper, is that of \emph{advice
  complexity}, where information about the future is
obtained in the form of some always correct bits of advice (see \cite{BoyarFKLM17} for a survey). 
\citet{EmekFKR11} considered MTS under advice complexity,
and \citet{Angelopoulos19} consider advice complexity with possibly
adversarial advice and focus on Pareto-optimal algorithms for
consistency and robustness in several similar online problems.

\subsection{Comparison to the setup of Lykouris and Vassilvitskii}
\label{sec:setup_prop}

Although the work of \citet{LykourisV18} for caching served as an inspiration,
our prediction setup cannot be understood as an extension or generalization
of their setup. Here we list the most important connections and differences.

\paragraph{Conversion of Predictions for Caching.}
One can convert the reoccurrence time predictions of \citet{LykourisV18}
for caching into
predictions for our setup using a natural algorithm:
At each page fault, evict the
page whose next request is predicted furthest in the future.
Note that, if given perfect predictions, this algorithm produces an optimal
solution \citep{Belady66}.
The states of this algorithm at each time are then interpreted as predictions
in our setup.
We use this conversion to compare the performance of our
algorithms to those of \citet{LykourisV18} and \citet{Rohatgi20} in empirical
experiments in Section~\ref{sec:experiments}.

\paragraph{Prediction Error.}
The prediction error as defined by \citet{LykourisV18} is not directly
comparable to ours.
Here are two examples.

(1) Consider a paging instance where  some page $p$ is requested at times $1$ and $3$, and suppose we are given reoccurrence time predictions that are almost perfect except at time $1$ where it is predicted that $p$ reoccurs at time $T$ rather than $3$, for some large $T$. Then the prediction error in the setting of \citet{LykourisV18} is $\Omega(T)$. However, the corresponding action predictions obtained by the conversion above are wrong only at time step $2$, meaning the prediction error in our setting is only $1$ with respect to the offline optimum.

(2) One can create a request sequence consisting of $k+1$ distinct pages
where swapping two predicted times of next arrivals causes a different
prediction to be generated by the conversion algorithm.
The modified prediction in the setup of \citet{LykourisV18} may only have error $2$ while the error in our setup
with respect to the offline optimum can be arbitrarily high (depending on how
far in the future these arrivals happen).
However, our results provide meaningful bounds also in this situation.
Such predictions still have error $0$ in our setup with respect
to a near-optimal algorithm which incurs only one additional page fault
compared to the offline optimum.
Theorems
\ref{thm:ub-det}--\ref{thm:paging} then
provide constant-competitive algorithms with respect to this near-optimal
algorithm.

The first example shows that the results of
\citet{LykourisV18,Rohatgi20,AlexWei20} do not imply any bounds in our setup.
On the other hand, the recent result of \citet{AlexWei20} shows that our
algorithms from Theorems \ref{thm:ub-det}--\ref{thm:paging}, combined with the
prediction-converting algorithm above, are
$O(1+\min\{\frac1k\frac{\eta}\opt, \log k\})$-competitive
for caching in the setup of \citet{LykourisV18}, thus also matching the best known competitive ratio in that setup:
The output of the conversion algorithm has error 0 with respect
to itself and our algorithms are constant-competitive with respect to it.
Since the competitive ratio of the conversion algorithm is
$O(1+\frac1k \frac{\eta}{\opt})$ by \citet{AlexWei20},
our algorithms are
$O(\min\{1+\frac1k \frac{\eta}{\opt}, \log k\})$-competitive,
where $\eta$ denotes the prediction error in the setup of \citet{LykourisV18}.

\paragraph{Succinctness.}
In the case of caching, we can restrict ourselves to
\emph{lazy} predictors, where each predicted cache content differs from the previous predicted cache content by at most one page, and only if the previous predicted cache content did not contain the requested page. This is motivated by the fact that any algorithm can be transformed into a lazy version of itself without increasing its cost.
Therefore, $O(\log k)$ bits are enough to describe each action prediction,
saying which page should be evicted,
compared to $\Theta(\log T)$ bits needed to encode a reoccurrence time
in the setup of \citet{LykourisV18}. In fact, we need to receive a prediction not for all time steps but only those when the current request is not part of the previous cache content of the predictor.
In cases when running an ML predictor at each of these time steps is too costly,
our setup allows predictions being generated by some fast heuristic
whose parameters can be recalculated by the ML algorithm only when needed.

\paragraph{Learnability.}
In order to generate the reoccurrence time predictions,
\citet{LykourisV18} used a simple PLECO \citep{PLECO} predictor.
In this paper, we introduce another simple predictor called POPU
and show that the output of these predictors can be converted to
action predictions.

Predictors \emph{Hawkey} \cite{Hawkey16} and \emph{Glider} \cite{Glider19}
use binary classifiers to
identify pages in the cache which are going to be reused soon,
evicting first the other ones.
As shown by their empirical results, such binary information is enough to
produce a very efficient cache replacement policy, i.e.,
action predictions.
In their recent paper, \citet{LiuHSRA20} have proposed a new predictor, called
\emph{Parrot}, that is trained using the imitation learning approach and tries
to mimic the behaviour of the optimal offline algorithm~\citep{Belady66}.
The main output of their model, implemented using a neural network, are
in fact action predictions.
However it
also produces the reoccurrence time predictions in order
to add further supervision during the training process.
While at first it may seem that predicting reoccurrence times is an easier
task (in particular, it has the form of a standard supervised learning task),
the results of \citet{LiuHSRA20} show that it might well be the opposite -- e.g.,
when the input instance variance makes it impossible to predict the 
reoccurrence times accurately yet it is still possible to solve it (nearly) optimally online.
We refer to the paper of \citet{Adametal21} 
for an extensive evaluation of the existing learning augmented algorithms
using both reoccurrence time and action predictions.
{Following the emergence of learning-augmented algorithms, Anand et al.~\cite{anand2020customizing} even designed predictors specifically tuned to optimize the error used in the algorithms analysis. This work has been restricted so far to a much simpler online problem, ski rental.}

\section{Preliminaries}
\label{sec:prelim}

In MTS, we are given a metric space
$M$ of states and an initial state
$x_0 \in M$. At each time $t = 1, 2, \dotsc$, we receive
a task $\ell_t \colon M\to \R^+ \cup \{0,+\infty\}$
and we have to choose a new state $x_t$ without knowledge of the
future tasks, incurring cost
$\dist(x_{t-1},x_t) + \ell_t(x_t)$.
Note that $\dist(x_{t-1},x_t)=0$ if $x_{t-1} = x_t$
by the identity property of metrics.

Although MTS share several similarities with the {\em experts} problem
from the theory of online learning \citep{FreundS97,Chung94},
there are three important differences.
First, there is a {\em switching cost}:
we need to pay cost for switching between states
equal to their distance in the underlying metric space.
Second, an algorithm for MTS has {\em one-step lookahead}, i.e., it can see
the task (or loss function) before choosing the new state and
incurring the cost of this task.
Third, there can be {\em unbounded costs} in MTS, which can be handled thanks to
the lookahead.
See \citet{BlumB00} for more details on the relation between
experts and MTS.

To assess the performance of algorithms, we use the
{\em competitive ratio} --
the classical measure used in online algorithms.

\begin{definition}[Competitive ratio]
Let $\alg$ be an online algorithm for some cost-minimization problem $P$.
We say that $\alg$ is $r$-{\em competitive} and call $r$ the
{\em competitive ratio} of $\alg$,
if for any input sequence $I \in P$, we have
\[ \E[\cost(\alg(I))] \leq r\cdot \opt_I + \alpha, \]
where $\alpha$ is a constant independent of the input sequence,
$\alg(I)$ is the solution produced by the online algorithm and
$\opt_I$ is the cost of an optimal solution computed offline with the prior
knowledge of the whole input sequence.
The expectation is over the randomness in the online algorithm. If $\opt_I$ is replaced by the cost of some specific algorithm $\off$, we say that $\alg$ is $r$-competitive against $\off$.
\end{definition}

{Before we prove our results for general MTS, we consider in the next section the caching problem. It corresponds to the special case of MTS where the metric space is the set of cardinality-$k$ subsets of a cardinality-$n$ set (of pages), the distance between two sets is the number of pages in which they differ, and each cost function assigns value $0$ to all sets containing some page $r_t$ and $\infty$ to other sets.}

\section{Logarithmic Error Dependence for Caching}
\label{sec:paging}

We describe in this section
a new algorithm, which we call
\algname, for the (unweighted) caching problem, and prove Theorem~\ref{thm:paging}. The algorithm
achieves a competitive ratio logarithmic in the error
(thus overcoming the lower bound of Theorem~\ref{thm:generalLB} that holds for general MTS even on a uniform metric),
while also attaining the optimal worst-case guarantee of $O(\log k)$.

We assume that the predictor is \emph{lazy} in the following sense.
Let $r_t$ be the page that is requested at time $t$ and let $P_t$ be the
configuration (i.e., set of pages in the cache) of the predictor at time $t$.
Then $P_t$ differs from $P_{t-1}$ only if $r_t\notin P_{t-1}$ and, in this case, $P_t= P_{t-1}\cup\{r_t\}\setminus \{q\}$ for some page $q\in P_{t-1}$.
Note that any algorithm for caching can be converted into a lazy one without increasing its cost.

We partition the request sequence into \emph{phases}, which are maximal time periods where $k$ distinct pages are requested\footnote{Subdividing the input sequence into such phases is a very common technique in the analysis of caching algorithms, see for example~\citet{BorodinEY1998} and references therein.}: The first phase begins with the first request. A phase ends (and a new phase begins) after $k$ distinct pages have been requested in the current phase and right before the next arrival of a page that is different from all these $k$ pages. For a given point in time, we say that a page is \emph{marked} if it has been requested at least once in the current phase. For each page $p$ requested in a
phase, we call the first request to $p$ in that phase the
\emph{arrival} of $p$. This is the time when $p$ gets marked. Many algorithms, including that of \citet{LykourisV18}, belong to the class of so-called \emph{marking algorithms}, which evict a page only if it is unmarked. The classical $O(\log k)$-competitive online algorithm of \citep{FiatKLMSY91} is a particularly simple marking algorithm: On a cache miss, evict a uniformly random unmarked page. In general, no marking algorithm can be better than $2$-competitive even when provided with perfect predictions.
As
will become clear from the definition of \algname later, it may follow the predictor's advice to evict even marked pages, meaning that it is not a marking algorithm.
As can be seen in our experiments in Section~\ref{sec:experiments}, this allows \algname to outperform previous algorithms when predictions are good.\footnote{There exist instances where \algname with perfect predictions strictly outperforms the best marking algorithm, but also vice versa, see Appendix~\ref{sec:TDvsMarking}.} We believe that one could modify the algorithm so that it is truly $1$-competitive in the case of perfect predictions. However, formally proving so seems to require a significant amount of additional technical complications regarding notation and algorithm description. To keep the presentation relatively simple, we abstain from optimizing constants here.%

\subsection{First warm-up: A universe of $\boldsymbol{k+1}$ pages}
Before we give the full-fledged \algname algorithm for the general setting, we first describe an algorithm for the simpler setting when there exist only $k+1$ different pages that can be requested. This assumption substantially simplifies both the description and the analysis of the algorithm while already showcasing some key ideas. In Sections~\ref{sec:MarkingPredictor} and~\ref{sec:TnDgeneral}, we will explain the additional ideas required to extend the algorithm to the general case.

Our assumption means that at each time, there is only one page missing from the algorithm's cache and only one page missing from the predicted cache. Moreover, the first request in each phase is an arrival of the (unique) page that was not requested in the previous phase, and all other arrivals in a phase are requests to pages that were also requested in the previous phase.

\subsubsection{Algorithm (simplified setting)}

We denote by $M$ the set of marked pages and by $U$ the set of unmarked pages.

In each phase, we partition time into alternating \emph{Trust intervals} and \emph{Doubt intervals}, as follows: When a phase starts, the first Trust interval begins. Throughout each Trust interval, we ensure that the algorithm's cache is equal to the predicted cache $P_t$. As soon as the page missing from $P_t$ is requested during a Trust interval, we terminate the current Trust interval and start a new Doubt interval. In a Doubt interval, we treat page faults by evicting a uniformly random page from $U$. As soon as there have been $2^{i-1}$ arrivals since the beginning of the $i$th Doubt interval of a phase, the Doubt interval ends and a new Trust interval begins (and we again ensure that the algorithm's cache is equal to $P_t$).

\subsubsection{Analysis (simplified setting)}
Let $d_\ell$ be the number of Doubt intervals in phase $\ell$.

\begin{claim}\label{cl:costdl}
The expected number of cache misses in phase $\ell$ is $1+O(d_\ell)$.
\end{claim}
\begin{proof}
Any cache miss during a Trust interval starts a new Doubt interval, so there are $d_\ell$ cache misses during Trust intervals. There may be one more cache miss at the start of the phase. It remains to show that there are $O(d_\ell)$ cache misses in expectation during Doubt intervals.

In a Doubt interval, we can have a cache miss only when a page from $U$ arrives. The arriving page from $U$ is the one missing from the cache with probability $1/|U|$. Moreover, when a page from $U$ arrives, it is removed from $U$. The expected number of cache misses during Doubt intervals is therefore a sum of terms of the form $1/|U|$ for distinct values of $|U|$. Since the total number of arrivals during Doubt intervals is at most $2^{d_\ell}$, the expected number of cache misses during Doubt intervals is at most
$\sum_{u=1}^{2^{d_\ell}} 1/u = O(d_\ell)$.%
\end{proof}

Due to the claim, our main remaining task is to upper bound the number of Doubt intervals.

We call a Doubt interval \emph{error interval} if at each time $t$ during the interval, the page missing from $P_t$ is present in the cache of the offline algorithm. Note that each time step during an error interval contributes to the error $\eta$. Let $e_\ell\le d_\ell$ be the number of error intervals of phase $\ell$. Since for all $i<e_\ell$, the $i$th error interval contains at least $2^{i-1}$ time steps, we can bound the error as
\begin{align}
\eta\ge\sum_\ell \left(2^{e_\ell-1}-1\right).\label{eq:errorSimple}
\end{align}
Denote by $\off_\ell$ the cost of the offline algorithm during phase $\ell$.

\begin{claim}\label{cl:boundOnd}
	$d_\ell\le\off_\ell+e_\ell+1$.
\end{claim}
\begin{proof}
Consider the quantity $d_\ell-e_\ell$. This is the number of Doubt intervals of phase $\ell$ during which $P_t$ is equal to the offline cache at some point. Since $P_t$ changes at the start of each Doubt interval, but $P_t$ changes only if the page missing from $P_t$ is requested (since we assume the predictor to be lazy), this means that the offline cache must change between any two such intervals. Thus, $d_\ell-e_\ell-1\le \off_\ell$.
\end{proof}

Combining these claims, the total number of cache misses of the algorithm is at most (noting by $\#\textit{phases}$ the number of phases)
\begin{align*}
\sum_{\ell}\left(1+O(d_\ell)\right)\le O\left(\off + \#\textit{phases} + \sum_\ell (e_\ell-1)\right).
\end{align*}
Each term $(e_{\ell}-1)$ can be rewritten as $\log_2\big(1+(2^{e_{\ell}-1}-1)\big)$. By concavity of $x\mapsto\log(1+x)$, subject to the bound \eqref{eq:errorSimple} the sum of these terms is maximized when each term $2^{e_{\ell}-1}-1$ equals $\frac{\eta}{\#\textit{phases}}$. Thus, the total number of cache misses of the algorithm is at most
\begin{align*}
O\left(\off + \#\textit{phases} + \#\textit{phases}\cdot \log\left(1+\frac{\eta}{\#\textit{phases}}\right)\right)\le \off \cdot O\left(1 + \log\left(1 + \frac{\eta}{\off}\right)\right),
\end{align*}
where the last inequality uses that $\off =\Omega(\#\textit{phases})$ since all $k+1$ pages are requested in any two adjacent phases, so the offline algorithm must have a cache miss in any two adjacent phases.

\subsection{Second warm-up: The predictor is a marking algorithm}\label{sec:MarkingPredictor}
We now drop the assumption from the previous section and allow the number of pages in the universe to be arbitrary. However, we will assume in this section that the predictor is a marking algorithm (i.e., the predicted cache $P_t$ always contains all marked pages). In this case, our algorithm will also be a marking algorithm. %

Our algorithm is again based on phases, which are defined as before. Denote by $U$ the set of unmarked pages that were in the cache at the beginning of the phase, and by $M$ the set of marked pages. By our assumption that both the predictor and our algorithm are marking algorithms, at the start of a phase $U$ is equal to both the predicted cache as well as the algorithm's cache as it contains precisely the pages that were requested in the previous phase. An important notion in phase-based paging algorithms is that of \emph{clean pages}. For the setting considered in this section, where the predictor is a marking algorithm, we define a page as \emph{clean} if it is requested in the current phase but was not requested in the previous phase. We denote by $C$ the set of clean pages that have arrived so far in the current phase.
(In the general setting, we will need to define clean pages slightly differently.)

\paragraph{Several simultaneous interval partitions.} While in the first warm-up setting with a $(k+1)$-page universe there could be only a single clean page per phase, a main difference now is that there can be \emph{several} clean pages in a phase. For this reason, it is no longer sufficient to partition the phase into Trust intervals and Doubt intervals that are defined ``globally''. Instead, we will associate with \emph{each} clean page a \emph{different} subdivision of time into Trust intervals and Doubt intervals: The time from the arrival of $q$ until the end of the phase is partitioned into alternating \emph{$q$-Trust intervals} and \emph{$q$-Doubt intervals}. Thus, a time $t$ can belong to various intervals -- one associated to each clean $q$ that has arrived so far in the current phase. At any time, some of the current intervals may be Trust intervals while the rest are Doubt intervals. During a $q$-Trust interval, we will \emph{not} ensure that the entire algorithm's cache is equal to the predictor's cache, but only that one particular page $f(q)$ that is evicted by the predictor is also predicted by the algorithm.

More precisely, we will also maintain a map $f\colon C\to U\setminus P_t$ that maps each clean page $q\in C_\ell$ (that has arrived so far) to an associated page $f(q)$ that was evicted by the predictor during the current phase (and is currently still missing from the predictor's cache). Intuitively, we can think of $f(q)$ as the page that the predictor advises us to evict to make space for $q$. If it happens that the page $f(q)$ associated to some clean $q$ is requested, the predictor has to load $f(q)$ back to its cache $P_t$, and we redefine $f(q)$ to be the page that the predictor evicts at this time. %
Observe that this ensures that the pages $f(q)$ are distinct for different $q$ (in fact, since we assume the predictor to be a lazy marking algorithm, $f$ is a bijection in this case).

When a clean page $q$ arrives, the first $q$-Trust interval begins. Throughout each $q$-Trust interval, we will ensure that the page $f(q)$ is evicted from our algorithm's cache. If during a $q$-Trust interval the page $f(q)$ is requested, we terminate the current $q$-Trust interval and start a new $q$-Doubt interval. In a $q$-Doubt interval, we ignore the advice to evict $f(q)$ and instead evict a uniformly random unmarked page when necessary. As soon as there have been $2^{i-1}$ arrivals since the beginning of the $i$th $q$-Doubt interval, the $q$-Doubt interval ends and a new $q$-Trust interval begins (and we again ensure that the page currently defined as $f(q)$ is evicted).

We will skip a more formal description and analysis of the algorithm for this setting as it will be contained as a special case of our algorithm in the next section.

{\begin{remark}
		At a high level, the idea of linking evictions to individual clean pages (which is explicit for the pages $f(q)$ evicted in Trust intervals) bears some similarities to the notion of eviction chains used in~\cite{LykourisV18,Rohatgi20}. However, our algorithm and charging scheme are quite different. In particular, the natural adaptations of algorithms in~\cite{LykourisV18,Rohatgi20} to our setting would only be $\Omega(\log k)$-competitive even when $\frac{\eta}{\opt}=O(1)$, where $\eta$ is the prediction error in our setting. This can happen on instances where predictions are mostly good, but occasionally very bad. To overcome this, we use Doubt intervals that start small and grow over time, which allows our algorithm to recover quickly from occasional very bad predictions.
\end{remark}}

\subsection{Algorithm for the general case}
\label{sec:TnDgeneral}

We now describe our algorithm \algname for the general case. In contrast to the previous section, we drop here the assumption that the predictor must be a marking algorithm. Thus, the predictor may evict marked pages, and since \algname may trust such evictions, also \algname may evict marked pages. Consequently, it is no longer true that set of pages in the algorithm's cache at the start of a phase is equal to the set of pages requested in the previous phase. This means that some pages may be ``ancient'' as per the following definition. 

\begin{definition} A page is called
\emph{ancient} if it is in \algname's cache even though it has been requested in
neither the previous nor the current phase (so far).
\end{definition}

We partition each phase into two stages that are determined by whether ancient pages exist or not: During stage one there exists at least one ancient page, and during stage two there exist no ancient pages. We note that one of the two stages may be empty.

The algorithm for stage one is very simple: Whenever there is a page fault, evict an arbitrary ancient page. This makes sense since ancient pages have not been requested for a long time, so we treat them like a reserve of pages that are safe to
evict. Once this reserve has been used up, stage two begins.

The algorithm for stage two is essentially the one already described in the previous section. Before we give a more formal description, we first fix some notation. Let $U$ be the set of pages that were in cache at the beginning of stage two and that are currently unmarked. Let $M$ be the set of marked pages. We call a page \emph{clean} for a phase if it arrives in stage two and it was not in $U\cup M$ immediately before its arrival. (Pages arriving in stage one are \emph{not} considered clean as these are easy to charge for and do not need the analysis linked to clean pages in stage two.) By $C$ we denote the set of clean pages that have arrived so far in the current phase.

\begin{figure}[tbh]
  \centering
  \newlength{\scal}
\setlength{\scal}{0.6cm}
\newcommand{\padphase}{0.2}
	\begin{tikzpicture}[x=\scal, y=\scal, page/.style={rectangle, draw, black, minimum width=\scal, minimum height=\scal,font=\vphantom{b}}, 
		phase/.style={very thick, red }, arrive/.style={fill=yellow!50!white}]
		
		\node[page, arrive] (a) at (0,0) {a};
		\node[page, anchor = east]  at (-0.5,0) {\dots};
		\node at (-3,0) {$k=3$} ;
		\node[page, arrive] at (1,0) {b};
		\node[page] at (2,0) {a};
		\node[page, arrive] at (3,0) {c};
		\node[page] at (4,0) {a};
		\node[page, arrive] at (5,0) {e};
		\node[page, arrive] at (6,0) {b};
		\node[page, arrive] at (7,0) {a};
		\node[page] at (8,0) {e};
		\node[page, arrive] at (9,0) {c};
		\node[page, anchor=west] at (9.5,0) {\dots};
		
		\draw[phase] (-0.5,-\scal/2-\padphase*\scal) -- + ( 0,\scal+2*\padphase*\scal)
		(4.5,-\scal/2-\padphase*\scal) edge node [at end, above] {t}   + ( 0,\scal+2*\padphase*\scal)
		(8.5,-\scal/2-\padphase*\scal) -- + ( 0,\scal+2*\padphase*\scal);
		
		\node at (4.5,-2) {\algname{} possible cache at time $t$: \texttt{\{a,c,\colorbox{blue!30}{d}\}}};
		
		\node[arrive] at (-0.5,1.5) {{\it arrivals}};
		\node[red] at (10,1.5) {{\it end of phases}};
		
		\node[fill= blue!30] at (13,-1.5) {{\it ancient}};
		
	\end{tikzpicture}
	\caption{Illustration of definitions used to describe \algname. At time $t$, a new phase starts and the cache contains $a$, $c$ and $d$, where $d$ is ancient. In the following phase, $e$ is not clean because it arrives during stage one. When $e$ is requested, \algname evicts the ancient page $d$ and loads $e$, and then stage two begins with $U=\{a,c\}$ and $M=\{e\}$ initially. The page $b$ requested next is considered clean because, although it was also requested in the previous phase, it was not in $U\cup M$ immediately before its request. The next requested page $a$ is not clean as it was already in~$U\cup M$.}
	\label{fig:markingillu}
\end{figure}

It is immediate from the definitions that the following equation is maintained during stage two:
\begin{align}
|U\cup M|= k + |C|\label{eq:UMeqkC}
\end{align}

Similarly to before, \algname maintains an injective map $f\colon C\to (U\cup M)\setminus P_t$ that maps each clean page $q\in C$ (that has arrived so far) to a distinct page $f(q)$ that is currently missing from the predictor's cache. Note that since the predictor may evict marked pages, it is necessary to include marked pages in the codomain of $f$. As before, the time from the arrival of a clean page $q$ to the end of the phase is partitioned into alternating $q$-Trust intervals and $q$-Doubt intervals. Depending on the type of the current interval, we will also say that $q$ is trusted or $q$ is doubted. Let
\begin{align*}
	T&:=\{f(q)\mid q\in C\text{ trusted}\}\\
	D&:=\{f(q)\mid q\in C\text{ doubted}\}.
\end{align*}
To organize the random evictions that the algorithm makes, we sort the pages of $U$ in a uniformly random order at the beginning of stage two. We refer to the position of a page in this order as its \emph{rank}, and we will ensure that the randomly evicted pages are those with the lowest ranks.\footnote{Since randomly evicted pages may be reloaded even when they are not requested, maintaining such ranks leads to consistent random choices throughout a phase.}

A pseudocode of \algname when a page $r$ is requested in stage two is given in Algorithm~\ref{alg:TnD}.

If $r$ is clean and the request is an arrival (the condition in line~\ref{line:clean} is true), we first define its associated page $f(r)$ as an arbitrary\footnote{e.g., the least recently used} page from $(U\cup M)\setminus(P_t\cup T\cup D)$. We will justify later in Lemma~\ref{lem:welldef} that this set is non-empty. We then start an $r$-Trust interval. (Note that this adds $f(r)$ to the set $T$.) Since $r$ is then trusted, we ensure that $f(r)$ is evicted from the cache. If it was already evicted, then we instead evict the page in cache with the lowest rank. Either way, there is now a free cache slot that $r$ will be loaded to in line \ref{line:loadClean}. We also initialize a variable $t_r$ as $1$. For each clean page $q$, we will use this variable $t_q$ to determine the duration of the next $q$-Doubt interval.

Otherwise, we also ensure that $r$ is in cache, evicting the page in cache with the lowest rank if necessary (lines~\ref{line:miss}--\ref{line:loadNonClean}).

If $r$ is a page of the form $f(q)$, we redefine $f(q)$, and since the previous prediction to evict the old $f(q)$ was bad, we ensure that $q$ is now doubted (lines~\ref{line:fqReq}--\ref{line:startDoubt}), {\it i.e.}, we start a new $q$-Doubt interval if $q$ was trusted. %

Finally, in lines~\ref{line:foreach}--\ref{line:evictReload} we check for each clean page $q$ whether it should reach the end of its Doubt interval. If $q$ is in its $i$th Doubt interval, then this happens if the current arrival is the $2^{i-1}$th arrival after the start of the current $q$-Doubt interval. For each $q$ for which a $q$-Doubt interval ends, we start a new $q$-Trust interval and ensure that $f(q)$ is evicted from the cache. If $f(q)$ was not evicted yet, we reload the evicted page with the highest rank back to the cache so that the cache contains $k$ pages at all times.

\LinesNumbered
\begin{algorithm2e}
	\caption{When page $r$ is requested in phase $\ell$ and no ancient pages exist}\label{alg:TnD}
	\DontPrintSemicolon
	\If(\tcp*[f]{Arrival of a clean page}){$r\in C$ and this is the arrival of $r$\label{line:clean}}{
		Let $f(r)$ be an arbitrary page from $(U \cup M)\setminus (P_t\cup T\cup D)$\label{line:setFr}\;
		Start an $r$-Trust interval\label{line:initialTrust}\;
		\lIf{$f(r)$ is in cache}{evict $f(r)$}
		\lElse{evict the lowest ranked cached page from $U\setminus T$\label{line:evictRanked1}}
		Load $r$ to the cache\label{line:loadClean}\;
		$t_r:=1$\tcp*[f]{Duration of next $r$-Doubt interval (if it exists)}\label{line:tr1}\;
	}
	\ElseIf(\tcp*[f]{Page fault, but not arrival of clean page}){$r$ is not in cache\label{line:miss}}{
		{
			Evict the lowest ranked cached page from $U\setminus T$\label{line:evictRanked2}\;
			Load $r$ to the cache}\label{line:loadNonClean}
	}
	\If(\tcp*[f]{Advice to evict $f(q)$ was bad}){$r=f(q)$ for some $q\in C$\label{line:fqReq}}{
		Redefine $f(q)$ as an arbitrary page from $(U\cup M)\setminus (P_t\cup T\cup D)$\label{line:redefineFq}\;
		\If{$q$ is trusted}{
			End the $q$-Trust interval and start a $q$-Doubt interval\label{line:startDoubt}\;
		}
	}
		\ForEach(\tcp*[f]{Check for end of Doubt intervals}){$q\in C$ that is doubted\label{line:foreach}}{
			\If{the current request is the $t_q$th arrival since the start of this $q$-Doubt interval}{
				End the $q$-Doubt interval and start a $q$-Trust interval\;
				$t_q:= 2\cdot t_q$\;
				\If{$f(q)$ is in cache}{
					Evict $f(q)$ and load the highest ranked evicted page from $U\setminus T$ back to the cache\label{line:evictReload}}
			}
		}
\end{algorithm2e}

\begin{remark}
\label{rmk:lazy}
To simplify the analysis, the algorithm is defined non-lazily here in the
sense that it may load pages even when they are not requested (in line~\ref{line:evictReload}).
An implementation should only simulate this
non-lazy algorithm in the background and, whenever the actual algorithm
has a page fault, it evicts an arbitrary (e.g., the least recently used)
page that is present in its own cache but missing from the simulated
cache.
\end{remark}

\paragraph{Correctness.}
It is straightforward to check that the algorithm's cache is always a subset of $(U\cup M)\setminus T$, since any page added to $T$ is evicted. Moreover, it is always a superset of $M\setminus T$ because pages from $M$ are only evicted if they are in $T$.

The following two lemmas capture invariants that are maintained throughout the execution of the algorithm. In particular, they justify the the algorithm is well-defined.
\begin{lemma}\label{lem:welldef}
	The set $(U \cup M)\setminus (P_t\cup T\cup D)$ is non-empty before $f(r)$ or $f(q)$ is chosen from it in lines~\ref{line:setFr} and~\ref{line:redefineFq}.
\end{lemma}
\begin{proof}
	It suffices to show that $|U\cup M| > |P_t\cup T\cup D|$ right before the respective line is executed.
	
	Before line~\ref{line:setFr} is executed, it will be the case that $|C|=|T\cup D|+1$ (because $r\in C$, but $f(r)$ is not defined yet). Thus, equation \eqref{eq:UMeqkC} yields $|U\cup M|= k + |T\cup D|+1 > |P_t\cup T\cup D|$.
	
	Before line~\ref{line:redefineFq} is executed, it holds that $|C|=|T\cup D|$ and $r\in (T\cup D)\cap P_t$. Again, the inequality follows from equation \eqref{eq:UMeqkC}.
\end{proof}

The next lemma justifies that reloading a page in line~\ref{line:evictReload} will be possible, and the lemma will also be crucial for the competitive analysis later.
\begin{lemma}\label{lem:Dmissing}
	Before each request of stage two, there are $|D|$ pages from $U\setminus T$ missing from the cache.
\end{lemma}
\begin{proof}
	The pages in cache are a subset of $U\cup M$ of size $k$. By equation~\eqref{eq:UMeqkC}, there are $|C|=|T\cup D|$ of these pages missing from the cache. The pages in $T$ account for $|T|$ of those missing pages. The remaining $|D|$ missing pages are all in $U\setminus T$ (because pages from $M$ are only evicted if they are in~$T$).
\end{proof}

\subsection{Competitive analysis}
Let $C_\ell$ denote the set $C$ at the end of phase $\ell$. The next
lemma and its proof are similar to a statement in \citet{FiatKLMSY91}. However, since
our definition of clean pages is different, we need to reprove it in our setting.
\begin{lemma}\label{lem:cachingOpt}
	Any offline algorithm suffers cost at least
	\begin{align*}
	\off\ge\Omega\left(\sum_{\ell}|C_\ell|\right).
	\end{align*}
\end{lemma}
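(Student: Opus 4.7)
The plan is to mimic the classical proof of \citet{FiatKLMSY91}, with one new counting ingredient that expresses $|C_\ell|$ in terms of the stale set $S_\ell$. Throughout, let $R_\ell$ denote the set of $k$ distinct pages requested in phase $\ell$, and $|A_\ell^0|$ the number of ancient pages at the beginning of phase $\ell$.

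The first step is to establish the identity $|C_\ell| = k - |S_\ell\cap R_\ell| - |A_\ell^0|$. To see this, partition the $k$ arrivals in phase $\ell$ into stale arrivals (there are $|S_\ell\cap R_\ell|$ of these) and the remaining $k-|S_\ell\cap R_\ell|$ non-stale arrivals. Each non-stale arrival occurring while $A\neq\emptyset$ reduces $|A|$ by exactly one: either the requested page is itself ancient and thus becomes marked and leaves $A$, or it is absent from \algname's cache, causing a page fault that evicts some ancient page (recall that \algname always evicts from $A$ on a fault as long as $A\neq\emptyset$). Since no page loaded by \algname within a phase is ancient (loaded pages are either the just-requested page or, in step~4, come from $U$), $|A|$ only decreases within a phase, so exactly $|A_\ell^0|$ non-stale arrivals are needed to empty $A$; the remaining non-stale arrivals happen after $A$ empties and are clean by definition.

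Once \algname's cache is full (which holds from the second phase onward) we have $|A_\ell^0|=k-|S_\ell|$, and the identity simplifies to $|C_\ell|=|S_\ell\setminus R_\ell|$. Because $S_\ell\subseteq R_{\ell-1}$ by the definition of stale pages, we obtain the crucial bound $|C_\ell|\leq |R_{\ell-1}\setminus R_\ell|$. From this, a standard two-phase argument gives the result: any offline algorithm \off must serve all $|R_{\ell-1}\cup R_\ell|=k+|R_{\ell-1}\setminus R_\ell|$ distinct pages during phases $\ell-1$ and $\ell$, while its cache initially holds at most $k$ pages, so \off incurs at least $|R_{\ell-1}\setminus R_\ell|\geq |C_\ell|$ page faults across these two phases. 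Summing over $\ell\geq 2$ and observing that each phase appears in at most two consecutive pairs yields $\sum_{\ell\geq 2}|C_\ell|\leq 2\off$; the phase-1 term is bounded separately by \off's faults in phase $1$, giving $\off=\Omega(\sum_\ell |C_\ell|)$.

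The main subtlety lies in verifying the bijection in step one between non-stale arrivals occurring while $A\neq\emptyset$ and reductions of $|A|$. The delicate case is an originally-ancient page that was evicted earlier in the phase and is then re-requested before $A$ empties: such an arrival is non-stale and, as a fault while $A\neq\emptyset$, triggers the eviction of some (other) ancient page, so it does reduce $|A|$ by one---consistent with the count. The fact that \algname handles every fault while $A\neq\emptyset$ uniformly ensures this argument goes through without any case analysis on the internal state of the $T$, $D$, $U$, $M$ sets.
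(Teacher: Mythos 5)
Your proof is correct and uses essentially the same approach as the paper: both hinge on the observation that each non-stale arrival while $A\ne\emptyset$ decreases $|A|$ by exactly one (so the $|C_\ell|$ clean pages, together with the $|A_\ell^0|$ non-stale arrivals before $A$ empties and the stale pages, force $\ge k+|C_\ell|$ distinct pages across phases $\ell-1$ and $\ell$), followed by the standard two-phase accounting. Your presentation is slightly more formal, packaging the count into the identity $|C_\ell|=|S_\ell\setminus R_\ell|\le|R_{\ell-1}\setminus R_\ell|$, but the key idea and the resulting constant are the same as in the paper.
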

\begin{proof}
	We first claim that at least $k+|C_\ell|$ distinct pages are requested in phases $\ell-1$ and $\ell$ together. If there is no stage two in phase $\ell$, then $C_\ell$ is empty and the statement trivial. Otherwise, all pages that are in $U\cup M$ at the end of phase $\ell$ were requested in phase $\ell-1$ or $\ell$, and by equation~\eqref{eq:UMeqkC} this set contains $k+|C_\ell|$ pages.
	
	Thus, any algorithm must suffer at least cost $|C_\ell|$ during these two phases. Hence, $\off$ is lower bounded by the sum of $|C_\ell|$ over all even phases and, up to a constant, by the according sum over all odd phases. The lemma follows.
\end{proof}

By the following lemma, it suffices to bound the cost of \algname incurred during stage two.
\begin{lemma}\label{lem:chargeAncient}
	The cost during stage one of phase $\ell$ is at most the cost during stage two of phase $\ell-1$.
\end{lemma}
\begin{proof}
The cost during stage one of phase $\ell$ is at most the number of ancient pages at the beginning of phase $\ell$. This is at most the number of marked pages that were evicted in phase $\ell-1$. Since a marked page can be evicted only during stage two, the lemma follows.
\end{proof}

Let $d_{q,\ell}$ be the number of $q$-Doubt intervals in phase $\ell$. The next lemma is reminiscent of Claim~\ref{cl:costdl} from our first warm-up section.

\begin{lemma}\label{lem:costCldlq}
	The expected cost during stage two of phase $\ell$ is $O\left(|C_\ell|+\sum_{q\in C_\ell}d_{q,\ell}\right)$.
\end{lemma}
\begin{proof}
	The cost incurred in lines~\ref{line:clean}--\ref{line:tr1} is at most $O(|C_\ell|)$. In lines~\ref{line:miss}--\ref{line:loadNonClean}, the algorithm can incur cost only if the requested page was in $U\cup T$ before the request (because the request is not an arrival of a clean page, so it was in $U\cup M$, and if it was in $M\setminus T$ then it was in cache already). If the page was in $T$, then a new Doubt interval will start in line~\ref{line:startDoubt}, so the cost due to those pages is at most $\sum_{q\in C_\ell}d_{q,\ell}$. If the page was in $U\setminus T$, then by Lemma~\ref{lem:Dmissing} and the random choice of ranks it was missing from the cache with probability $\frac{|D|}{|U\setminus T|}$. To account for this cost, we charge $\frac{1}{|U\setminus T|}$ to each clean $q$ that is doubted at the time. Over the whole phase, the number of times we charge to each $q\in C_\ell$ in this way is at most the total number of arrivals during $q$-Doubt intervals, which is at most $2^{d_{q,\ell}}$. By equation~\eqref{eq:UMeqkC} and since $|C|=|T|+|D|$, we have $|U|=k+|T|+|D|-|M|$, so $|U\setminus T|\ge k+|D|-|M|\ge k+1-|M|$. The quantity $|M|$ increases by $1$ after each such request to a page in $|U\setminus T|$, so the value of $|U\setminus T|$ can be lower bounded by $1,2,3,\dots,2^{d_{q,\ell}}$ during the at most $2^{d_{q,\ell}}$ arrivals when $\frac{1}{|U\setminus T|}$ is charged to $q$. Hence, the total cost charged to $q$ is at most $O(d_{q,\ell})$. It follows that the overall cost incurred in lines is at most$O\left(\sum_{q\in C_\ell}d_{q,\ell}\right)$
	
	Finally, the only other time cost is incurred is in line~\ref{line:evictReload}. This also amounts to at most $\sum_{q\in C_\ell}d_{q,\ell}$ because it happens only at the end of a Doubt-interval.
\end{proof}

Denote by $e_{q,\ell}$ the number of $q$-Doubt intervals with the property that at each time during the interval, the page currently defined as $f(q)$ is present in the offline cache. Since the current page $f(q)$ is never in the predictor's cache, and the $i$th doubted $q$-interval contains $2^{i-1}$ arrivals for $i<d_q$, a lower bound on the prediction error is given by
\begin{align}
	\eta\ge \sum_\ell\sum_{q\in C_\ell} (2^{e_{q,\ell}-1}-1).\label{eq:cachingError}
\end{align}

Denote by $\off_{q,\ell}$ the number of times in phase $\ell$ when the offline algorithm incurs cost for loading the page currently defined as $f(q)$ to its cache.

The next lemma is the generalization of Claim~\ref{cl:boundOnd}.
\begin{lemma}\label{lem:boundOndql}
	For each $q\in C_\ell$, we have $d_{q,\ell}\le\off_{q,\ell} + e_{q,\ell} +1$.
\end{lemma}
\begin{proof}
	Consider the quantity $d_{q,\ell}-e_{q,\ell}$. This is the number of $q$-Doubt intervals of phase $\ell$ during which $f(q)$ is missing from the offline cache at some point. Except for the last such interval, the page $f(q)$ will subsequently be requested during the phase, so the offline algorithm will incur cost for loading it to its cache. The lemma follows, with the ``$+1$'' term accounting for the last interval.
\end{proof}

We are now ready to prove the main result of this section.
\begin{theorem*}[Restated Theorem~\ref{thm:paging}]
	\algname has competitive ratio $O(\min\{1+\log(1+\frac{\eta}{\off}),\log k\})$ against any offline algorithm $\off$, where $\eta$ is the prediction error with respect to $\off$.
\end{theorem*}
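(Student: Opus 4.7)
The natural plan is to analyze \algname\ phase by phase and to convert the per-phase bound into a global statement via Lemma~\ref{lem:cachingOpt}, which tells us that $\off \ge \Omega(\sum_\ell |C_\ell|)$. In each phase $\ell$, I would decompose the expected cost of \algname\ into three contributions: a deterministic $O(|C_\ell|+|A_\ell|)$ term for clean-page loads in step~\ref{it:serve} and for evictions from the ancient set $A$; a term $\sum_{q\in C_\ell} d_q$ counting the loads issued in step~\ref{it:power} at $q$-interval boundaries (where $d_q$ denotes the total number of times $t_q$ doubles); and a random stale-fault term whose per-request probability is exactly $|D|/|U|$ by Lemma~\ref{lem:cachingProb}. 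The $|A_\ell|$ part will then be amortized against $|C_{\ell-1}|$ in the usual FKLMSY-style way, since each ancient page was a non-stale page in the previous phase.

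To obtain the $O(\log k)$ side of the minimum, I would closely mimic the classical randomized-marking analysis of \citet{FiatKLMSY91}: since $\sum_q(2^{d_q}-1)$ cannot exceed the number of arrivals in the phase, each $d_q$ is at most $\log_2 k$; and since $|D|\le|C_\ell|\le k$, summing the probabilities $|D|/|U|$ over the at most $k$ stale requests telescopes into a harmonic bound $O(|C_\ell| H_k)$. Both pieces together give expected phase cost $O(|C_\ell|\log k)$ and therefore competitive ratio $O(\log k)$ after applying Lemma~\ref{lem:cachingOpt}.

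The $O(1+\log(1+\eta/\off))$ side is the main effort and the principal obstacle. The key combinatorial claim I would prove is that every $q$-interval of length $t$ that ends in a doubling can be charged either to a page fault of \off\ inside the interval or to $\Omega(t)$ units of prediction error accumulated during that time window. The reason is that the page currently playing the role of $p_q$ is, by the definitions in steps~\ref{it:clean} and~\ref{it:TD}, absent from the predictor's cache at the moment of its last (re)definition, yet it is requested inside the interval; unless \off\ faults on this page as well, \off's cache must contain it while the predictor's cache does not, for $\Omega(t)$ time steps, contributing $\Omega(t)$ to $\eta_\ell$. Adding $|C_\ell|$ for the trivial zero-doubling intervals, this yields $\sum_{q\in C_\ell} 2^{d_q}=O(|C_\ell|+\eta_\ell+\off_\ell)$. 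Applying Jensen's inequality to the concave function $x\mapsto \log_2 x$ converts this exponential sum into a logarithmic per-clean-page cost:
\[
\sum_{q\in C_\ell} d_q \;\le\; |C_\ell|\log_2\!\left(\frac{1}{|C_\ell|}\sum_{q\in C_\ell} 2^{d_q}\right) \;=\; O\!\left(|C_\ell|\log\!\left(1+\frac{\eta_\ell+\off_\ell}{|C_\ell|}\right)\right).
\]
The expected stale-fault contribution then admits a parallel bound, because at any moment the size of $D$ is dominated by the set of currently-doubted clean pages, which the same charging argument already controls.

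Summing over all phases and invoking Lemma~\ref{lem:cachingOpt} together with one last application of concavity of the logarithm converts the per-phase bound into the global statement of Theorem~\ref{thm:paging}. The hardest part of making this rigorous is the charging argument itself: step~\ref{it:TD} may redefine $p_q$ several times inside a single $q$-interval, multiple clean pages can be simultaneously affected by the same request, and any given request to a $p_q$ must be charged only $O(1)$ times against $\off$ or against $\eta$. I expect that handling this cleanly requires a careful potential argument that tracks, for each clean page independently, the total interval-length contributed by doubling events and matches them to disjoint portions of $\eta_\ell$ and $\off_\ell$.
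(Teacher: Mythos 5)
Your overall structure --- per-phase decomposition, appeal to Lemma~\ref{lem:cachingOpt}, and a concavity step to turn an exponential constraint into a logarithmic per-clean-page cost --- matches the paper's strategy, and the $O(\log k)$ side and the $|D|/|U|$ harmonic argument are in the right spirit. The gap is in the central combinatorial claim $\sum_{q\in C_\ell}2^{d_q}=O(|C_\ell|+\eta_\ell+\off_\ell)$, which is false. Your charging says each doubted $q$-interval of length $t$ is covered by one \off fault or by $\Omega(t)$ prediction error. Even if this dichotomy held, the fault sink would bound only the \emph{number} of fault-covered intervals (by $\off_\ell$), not their total \emph{length}. Since the $i$th doubted $q$-interval has $2^{i-1}$ arrivals, a single clean page $q$ for which \off faults on the current $p_q$ in every doubted interval gives $d_q\approx o_q=O(\off_\ell)$ but $\sum 2^{d_q}\approx 2^{\off_\ell}$, while the error can be $0$ (both \off and the predictor miss $p_q$ throughout). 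Moreover, the dichotomy itself is incomplete: there is a third case --- the paper calls it $r_q$ --- where the last $p_q$ of the interval lies in neither cache and \off never faults on any $p_q$ during the interval, so such an interval is charged to neither sink.

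The paper avoids both problems by never applying concavity to $d_q$ directly. It classifies each doubted interval as $r_q$, $e_q$, or $o_q$, shows $d_q\le r_q+e_q+o_q$, and amortizes the $r_q$ case with $r_q\le o_q+1$: after an $r_q$ interval, $p_q$ stays in $T$ until \off next faults on it, so $r_q$ events are interleaved with $o_q$ events. This yields the \emph{linear} bound $\sum_q d_q\le O(|C_\ell|+\off_\ell+\sum_q e_q)$, isolating all exponential behaviour in $\sum_q e_q$ under the constraint $\eta\ge\sum_q(2^{e_q-1}-1)$ (valid because distinct $q$'s carry distinct $p_q$'s and hence contribute disjointly to the error). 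Concavity is applied only there, to $\sum e_q$, not to $\sum d_q$. To complete your proof you would need to reproduce this three-way classification and the $r_q\le o_q+1$ amortization; the final concavity step and combination with Lemma~\ref{lem:cachingOpt} then go through essentially as you sketched.
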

\begin{proof}
	The $O(\log k)$ bound follows from Lemma~\ref{lem:costCldlq}, Lemma~\ref{lem:cachingOpt} and the fact that $d_{q,\ell}\le O(\log k)$ for each $q\in C_\ell$. The latter fact holds because if $d_{q,\ell}\ge 2$, then the $(d_{q,\ell}-1)$st $q$-Doubt interval contains $2^{d_{q,\ell}-2}$ arrivals, but there are only $k$ arrivals per phase.
	
	For the main bound, combining Lemmas~\ref{lem:chargeAncient}, \ref{lem:costCldlq} and \ref{lem:boundOndql} we see that the total cost of the algorithm is at most
	\begin{align*}
	O\left(\off+\sum_\ell|C_\ell|+\sum_\ell\sum_{q\in C_\ell}e_{q,\ell}\right).
	\end{align*}
	The summands $e_{q,\ell}$ can be rewritten as $1+\log_2\left(1+[2^{e_{q,\ell}-1}-1]\right)$.  By concavity of $x\mapsto\log(1+x)$, while respecting the bound \eqref{eq:cachingError} the sum of these terms is maximized when each term in brackets equals $\frac{\eta}{\sum_\ell|C_\ell|}$, giving a bound on the cost of
	\begin{align*}
	O\left(\off+\sum_\ell|C_\ell|\left(1+\log\left(1+\frac{\eta}{\sum_\ell|C_\ell|}\right)\right)\right).
	\end{align*}
	Since this quantity is increasing in $\sum_\ell|C_\ell|$, applying Lemma~\ref{lem:cachingOpt} completes the proof of the  theorem.
\end{proof}

\subsection{Lower bound}
The $O(\log k)$ upper bound matches the known lower bound $\Omega(\log k)$ on the competitive ratio of randomized online algorithms without prediction \cite{FiatKLMSY91}. %
 The competitive ratio of \algname when expressed only as a function of the error, $O(1+\log(1+\frac{\eta}{\off}))$, is also tight due to the following theorem. It should be noted, though, that for the competitive ratio as a function of both $k$ and $\frac{\eta}{\off}$ it is still plausible that a better bound can be achieved when $\frac{\eta}{\off}$ is relatively small compared to $k$.
\begin{theorem}
	If an online caching algorithm achieves competitive ratio at most $f(\frac{\eta}{\opt})$ for arbitrary $k$ when provided with action predictions with error at most $\eta$ with respect to an optimal offline algorithm $\opt$, then $f(x)=\Omega(\log x)$ as $x\to\infty$.
\end{theorem}
\begin{proof}
	Fix some $k+1$ pages and consider the request sequence where each request is to a uniformly randomly chosen page from this set. We define phases in the same way as in the description of \algname. By a standard coupon collector argument, each phase lasts $\Theta(k\log k)$ requests in expectation. An optimal offline algorithm can suffer only one page fault per page by evicting only the one page that is not requested in each phase. On the other hand, since requests are chosen uniformly at random, any online algorithm suffers a page fault with probability $1/(k+1)$ per request, giving a cost of $\Theta(\log k)$ per phase. Since $\frac{\eta}{\opt}=O(k\log k)$ due to the duration of phases, the competitive ratio of the algorithm is $\Omega(\log k)=\Omega(\log \frac{\eta}{\opt})$.
\end{proof}

\section{Robust Algorithms for MTS}
\label{sec:mts}

The goal of this section is to prove Theorem \ref{thm:ub-det} and Theorem
\ref{thm:ub-rand}, {which deal with algorithms substantially simpler than
\algname, but demonstrate the usefulness of our prediction setup for the broad
class of MTS problems. In Section~\ref{sec:ftp} we will first describe a simple algorithm whose competitive ratio depends linearly on the prediction error, but the algorithm is not robust against large errors. In Section~\ref{sec:combining} we then robustify this algorithm based on powerful methods from the literature. Finally, in Section~\ref{sec:lowerbounds} we show that the linear
dependency  of the achieved competitive ratio on $\eta/\opt$ is inevitable for
some MTS}.

\subsection{A non-robust algorithm}\label{sec:ftp}

We consider a simple memoryless algorithm, which we call \ftp.

\paragraph{Algorithm Follow the Prediction (\ftp).}
Intuitively, our algorithm follows the
predictions, but still somewhat cautiously: if there exists a state
``close'' to the predicted one that has a much cheaper service cost,
then it is to be preferred.
Let us consider a metrical task system with a set of states $X$.
We define the algorithm
\ftp (Follow the Prediction) as follows: at time $t$, after receiving
task $\ell_t$ and prediction $p_t$, it moves to the state
\begin{equation}\label{eq:ftp-def}
	x_t \gets \arg\min_{x\in X} \{\ell_t(x) + 2 \dist(x,p_t)\}.
\end{equation}
In other words,
\ftp follows the predictions except when it is beneficial to move from
the predicted state to some other state, pay the service and move back
to the predicted state.

\begin{lemma}\label{lem:ftp}
For any MTS with action predictions, algorithm $\ftp$ which achieves competitive ratio $1 + \frac{4\eta}{\off}$
against any offline algorithm \off, where $\eta$ is the prediction
error with respect to \off.
\end{lemma}
\begin{proof}%
At each time $t$, the \ftp algorithm is located at configuration $x_{t-1}$ and
needs to choose $x_t$ after receiving task $\ell_t$ and prediction $p_t$.
Let us consider some offline algorithm \off.
We denote $x_0, o_1, \dotsc, o_T$ the states of \off, where
the initial state $x_0$ is common for \off and for \ftp,
and $T$ denotes the length of the sequence.

We define $A_t$ to be the algorithm which agrees with $\ftp$ in its
first $t$ configurations $x_0, x_1, \dotsc, x_t$ and then agrees with the
states of \off, i.e., $o_{t+1}, \dotsc, o_T$.
Note that $\cost(A_0) = \off$ and $\cost(A_T) = \cost(\ftp)$.
We claim that $\cost(A_t)\leq \cost(A_{t-1}) + 4\eta_t$ for each $t$, where
$\eta_t = \dist(p_t, o_t)$.
The algorithms $A_t$ and $A_{t-1}$ are in the same configuration at each time
except $t$, when $A_t$ is in $x_t$ while $A_{t-1}$ is in $o_t$.
By the triangle inequality, we have
\begin{align*}
\cost(A_t)
	&\leq \cost(A_{t-1}) + 2\dist(o_t,x_t) + \ell_t(x_t) - \ell_t(o_t)\\
	&\leq \cost(A_{t-1}) + 2\dist(o_t,p_t) - \ell_t(o_t)
		+\; 2\dist(p_t,x_t) + \ell_t(x_t)\\
	&\leq \cost(A_{t-1}) + 4\dist(o_t,p_t),
\end{align*}
The last inequality follows from \eqref{eq:ftp-def}: we have
$2\dist(p_t,x_t)+\ell_t(x_t) \leq 2\dist(p_t,o_t) + \ell_t(o_t)$.
By summing over all times $t=1, \dotsc, T$, we get
\[\textstyle
\cost(\ftp) = \cost(A_T) \leq \cost(A_0) + 4\sum_{t=1}^T \eta_t,
\]
which equals $\off + 4\eta$.
\end{proof}

\subsection{Combining online algorithms}\label{sec:combining}
{We describe now how to make algorithm \ftp robust by combining it with a classical online algorithm. Although we only need to combine two algorithms, we will formulate the combination theorems more generally for any number of algorithms.}

Consider $m$ algorithms $A_0, \dotsc, A_{m-1}$ for some problem $P$
belonging to MTS.
We describe two methods to combine them into one algorithm which
achieves a performance guarantee close to the best of them.
Note that these methods are also applicable to problems which do not belong to
MTS as long as one can simulate all the algorithms at once and
bound the cost for switching between them.

\paragraph{Deterministic Combination.}
The following method was proposed by \citet{FiatRR94} for the
$k$-server problem, but can be generalized to MTS. We note that a
similar combination is also mentioned in \citet{LykourisV18}.
We simulate the execution of $A_0, \dotsc, A_{m-1}$ simultaneously.
At each time, we stay in the configuration of one of them,
and we switch between the algorithms in the manner of
a solution for the $m$-lane {\em cow path} problem,
see Algorithm~\ref{alg:comb-det} for details.

\begin{algorithm2e}
	\caption{$MIN^{det}$ \citep{FiatRR94}}
	\label{alg:comb-det}
	\DontPrintSemicolon
	choose $1 < \gamma \leq 2; \quad$ set $\ell := 0$\;
	\Repeat{the end of the input}{
		$i:= \ell \bmod m$\;%
		while $\cost(A_i) \leq \gamma^{\ell}$, follow $A_i$\;
		$\ell := \ell+1$\;
	}
\end{algorithm2e}

\begin{theorem}[generalization of Theorem 1 in~\citet{FiatRR94}]
	\label{thm:simulation_deterministic}
	Given $m$ online algorithms $A_0,\dots A_{m-1}$ for a problem $P$ in MTS, the
	algorithm $MIN^{det}$ achieves cost
	at most $(\frac{2\gamma^m}{\gamma-1}+1)\cdot \min_i \{cost_{A_i}(I)\}$,
	for any input sequence $I$.
\end{theorem}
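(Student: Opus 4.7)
The plan is to adapt the cow-path style analysis of \citet{FiatRR94}, exploiting the geometric growth of the thresholds $\gamma^\ell$ so that, once $MIN^{det}$ reaches the phase dedicated to the best of the simulated algorithms, it commits to that algorithm and terminates with total paid cost only a constant factor larger. I first fix $A_{i^*}$ achieving the minimum cost $C^* := \min_i \cost_{A_i}(I)$ and define $\ell^*$ to be the smallest integer satisfying $\gamma^{\ell^*} \ge C^*$ together with $\ell^* \equiv i^* \pmod m$. These two conditions imply $\ell^* \le \lceil \log_\gamma C^* \rceil + m - 1$, and therefore the bookkeeping inequality $\gamma^{\ell^*} \le \gamma^m \cdot C^*$ that will drive the final bound.

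The next step is to observe that $MIN^{det}$ completes processing the input no later than during phase $\ell^*$: during that phase the algorithm follows $A_{i^*}$, whose total cost never exceeds $C^* \le \gamma^{\ell^*}$, so the while-condition $\cost(A_{i^*}) \le \gamma^{\ell^*}$ remains satisfied until the last task is processed. It therefore suffices to bound the cost $MIN^{det}$ incurs during phases $0, 1, \ldots, \ell^*$. I split this into \emph{follow cost} (paid while executing the inner while-loop of each phase) and \emph{switch cost} (paid at each phase boundary to move to the new simulated algorithm's state). The follow cost in phase $\ell$ equals the cost that $A_{\ell \bmod m}$ accrues during this phase, and is capped at $\gamma^\ell$ by the loop-termination condition, so the total follow cost is at most $\sum_{\ell=0}^{\ell^*} \gamma^\ell \le \gamma^{\ell^*+1}/(\gamma-1)$.

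For the switch costs, I use that all simulated algorithms start from the common initial state $x_0$, so the triangle inequality through $x_0$ bounds each switch distance $\dist(x_{A_{i_\ell}}(t_\ell), x_{A_{i_{\ell+1}}}(t_\ell))$ by the sum of the two algorithms' movement costs, which are themselves bounded by their total costs up to time $t_\ell$. The term $\cost(A_{i_\ell}, t_\ell)$ is controlled by $\gamma^\ell$ via the same loop-termination argument. The main obstacle I expect is controlling the second term $\cost(A_{i_{\ell+1}}, t_\ell)$, which is not directly constrained by the while-loop of phase $\ell$. My plan for this is an amortization: whenever $MIN^{det}$ actually follows $A_{i_{\ell+1}}$ in the next phase, the while-condition at the start of that phase forces $\cost(A_{i_{\ell+1}}, t_\ell) \le \gamma^{\ell+1}$; in the degenerate case where $A_{i_{\ell+1}}$'s cost has already crossed $\gamma^{\ell+1}$, I would route $MIN^{det}$ directly to the next non-skipped algorithm, charging the switch against the geometric term of the later phase in which that algorithm is eventually followed.

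Combining the two geometric sums and substituting $\gamma^{\ell^*} \le \gamma^m \cdot C^*$ would then yield a total cost of the form $\left(\frac{2\gamma^m}{\gamma-1}+1\right)\cdot C^*$, matching the theorem. The only MTS-specific ingredient in the plan is that the metric-space structure allows bounding the switch cost between simulated configurations by the triangle inequality through the common starting state, which is the single feature of $k$-server actually used by \citet{FiatRR94}; so the entire cow-path recipe transfers verbatim to general MTS.
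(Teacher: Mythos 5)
Your overall route mirrors the paper's: simulate the cow-path search, bound the switch distance via the triangle inequality through $x_0$ by the sum of the two algorithms' accrued costs, observe that once the phase index reaches $\ell^*\equiv i^*\pmod m$ with $\gamma^{\ell^*}\ge C^*$ the algorithm will never leave $A_{i^*}$, and then sum a geometric series with $\gamma^{\ell^*}<\gamma^m C^*$. But your decomposition into a separate \emph{follow cost} (bounded by $\gamma^\ell$ per phase) and \emph{switch cost} (bounded by $\gamma^\ell+\gamma^{\ell+1}$ per transition) double-counts. The paper deliberately keeps the switch-in and the execution of the same phase together: the switch into phase $\ell$ costs at most $\cost'(A_{i_{\ell-1}})+\cost'(A_{i_\ell})$, the execution in phase $\ell$ costs $\cost(A_{i_\ell})-\cost'(A_{i_\ell})$, and the term $\cost'(A_{i_\ell})$ (the cost of the newly followed algorithm at the moment of switching) cancels, so phase $\ell$ contributes only $\gamma^{\ell-1}+\gamma^\ell$. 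Your separate accounting cannot exploit this cancellation: the follow cost in phase $\ell$ is not $\gamma^\ell$ but the \emph{increment} $\cost(A_{i_\ell},t_\ell)-\cost(A_{i_\ell},t_{\ell-1})$, and by charging the full $\gamma^\ell$ there and then again charging $\cost(A_{i_\ell},t_\ell)\le\gamma^\ell$ in the switch out of phase $\ell$, you pay this amount twice.

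Concretely, your plan gives at best roughly $\sum_{\ell\le\ell^*}\gamma^\ell+\sum_{\ell<\ell^*}(\gamma^\ell+\gamma^{\ell+1})\approx\frac{3\gamma^{\ell^*}}{\gamma-1}$, i.e., a leading constant of $\frac{3\gamma^m}{\gamma-1}$, whereas the theorem requires $\frac{2\gamma^m}{\gamma-1}+1$. So the last sentence of your proposal, asserting that ``combining the two geometric sums \ldots would then yield'' the claimed bound, is not substantiated by the preceding estimates. To close the gap you need the phase-wise cancellation above (or, equivalently, a telescoping argument) rather than two independent geometric sums. A secondary point: your amortization for ``degenerate'' phases (where the new algorithm's cost has already passed its threshold) is the right idea but is only sketched; the clean way to state it is to restrict attention to the subsequence of phases in which at least one request is actually served, apply the triangle inequality directly between consecutive such phases, and observe that every index in this subsequence still lies in $\{0,\dots,L\}$, so the geometric sum bound is unchanged. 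The paper also has to handle the small-input case $L<m$ separately using the assumption $\opt_I\ge 1$; your ``fix $\ell^*$'' framing quietly needs the same assumption when $\ell^*<m$.
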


A proof of this theorem can be found in Section~\ref{sec:simulation}.
The optimal choice of $\gamma$ is $\frac{m}{m-1}$. Then
$\frac{2\gamma^m}{\gamma-1}+1$ becomes 9 for $m=2$,
and can be bounded by $2em$ for larger $m$. Combined with Lemma~\ref{lem:ftp}, we obtain Theorem \ref{thm:ub-det}.

\paragraph{Randomized Combination.}
\citet{BlumB00} proposed the following way to combine online algorithms
based on the WMR~\citep{Littlestone1994} (Weighted Majority Randomized)
algorithm for the experts problem.
At each time $t$, it maintains a probability distribution $p^t$
over the $m$ algorithms updated using WMR.
Let $\dist(p^t, p^{t+1}) = \sum_i \max\{0, p_i^t - p_i^{t+1}\}$
be the earth-mover distance between $p^t$ and $p^{t+1}$
and let $\tau_{ij} \geq 0$ be the transfer of the probability mass from
$p^{t}_i$ to $p^{t+1}_j$ certifying this distance, so that
$p^{t}_i = \sum_{j=0}^{m-1} \tau_{ij}$ and
$\dist(p^t, p^{t+1}) = \sum_{i\neq j} \tau_{ij}$.
If we are now following algorithm $A_i$, we switch to $A_j$
with probability $\tau_{ij}/p^t_i$.
See Algorithm~\ref{alg:comb-rand} for details.
The parameter $D$ is an upper bound on the switching cost between the states
of two algorithms.

\begin{algorithm2e}
	\caption{$MIN^{rand}$ \citep{BlumB00}}
	\label{alg:comb-rand}
	$\beta := 1-\frac{\epsilon}2$\tcp*{for parameter $\epsilon < 1/2$}
	$w_i^0 := 1$ for each $i=0, \dotsc, m-1$\;
	\ForEach{time $t$}{
		$c_i^t:=$ cost incurred by $A_i$ at time $t$\;
		$w_i^{t+1} := w_i^t\cdot \beta^{c_i^t/D}$ and
		$p_i^{t+1} := \frac{w_i^{t+1}}{\sum w_i^{t+1}}$\;
		$\tau_{i,j} :=$ mass transferred from $p^t_i$ to $p^{t+1}_j$\;
		switch from $A_i$ to $A_j$ w.p. $\tau_{ij}/p_i^t$\;
	}
\end{algorithm2e}

\begin{theorem}[\citet{BlumB00}]
	\label{thm:simulation_randomized}
	Given $m$ on-line algorithms $A_0,\dots A_{m-1}$ for an MTS with diameter $D$
	and $\epsilon < 1/2$, there
	is a randomized algorithm $MIN^{rand}$ such that, for any instance $I$,
	its expected cost is at most
	\[ (1+\epsilon)\cdot \min_i \{\cost(A_i(I))\} + O(D/\epsilon)\ln m. \]
\end{theorem}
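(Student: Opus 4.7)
The plan is to decompose the expected cost of $MIN^{rand}$ into two parts: the \emph{service cost}, which is the expected cost incurred at each time step for being in the state of the currently followed algorithm, and the \emph{movement cost}, which is the expected cost of switching between algorithms. Since $MIN^{rand}$ follows $A_i$ with probability $p_i^t$ at time $t$, the expected service cost at time $t$ equals $\sum_i p_i^t c_i^t$, and the expected movement cost at time $t$ is at most $D \cdot \mathit{dist}(p^t, p^{t+1})$ because each unit of probability mass transferred from $A_i$ to $A_j$ causes a switch whose cost is bounded by the diameter $D$.

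For the service cost I would run the classical Weighted Majority analysis. Define the potential $W^t = \sum_i w_i^t$, so $W^0 = m$ and $W^{t+1} = \sum_i w_i^t \beta^{c_i^t/D}$. Using the standard inequality $\beta^x \le 1 - (1-\beta)x$ for $x \in [0,1]$ (valid after rescaling costs by $D$, which bounds each single-step cost of any reasonable algorithm), one obtains $W^{t+1} \le W^t \bigl(1 - (1-\beta)\sum_i p_i^t c_i^t / D\bigr)$, hence $\ln W^T \le \ln m - \frac{1-\beta}{D}\sum_t \sum_i p_i^t c_i^t$. Lower-bounding $\ln W^T \ge \ln w_{i^*}^T = (\ln \beta) \cdot \cost(A_{i^*})/D$ for the best expert $i^*$ and rearranging gives, with $\beta = 1-\epsilon/2$, the bound $\sum_t \sum_i p_i^t c_i^t \le (1+\epsilon)\min_i\cost(A_i(I)) + O(D/\epsilon)\ln m$.

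The harder part, and the main obstacle, is controlling the total movement cost $D \sum_t \mathit{dist}(p^t, p^{t+1})$. The idea is a potential-function argument tracking how drastically the normalized weights $p^t$ can shift. One shows that the total $\ell_1$ movement of the distribution, summed over all rounds, is bounded by a quantity of the form $O\bigl(\tfrac{1}{\epsilon}\ln m + \tfrac{\epsilon}{D}\min_i \cost(A_i)\bigr)$: each time step either shifts mass only between experts of comparable cost (contributing little) or shifts it away from a suddenly expensive expert, and such a shift can be charged to a decrease in an entropy-like potential, whose total change is $O(\ln m/\epsilon)$. Multiplying by $D$ turns this into a movement cost of $O(D/\epsilon)\ln m + \epsilon \min_i \cost(A_i)$.

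Adding the service and movement bounds yields $(1+\epsilon)\min_i\cost(A_i(I)) + \epsilon \min_i \cost(A_i(I)) + O(D/\epsilon)\ln m$, and absorbing the extra $\epsilon$-factor (by rescaling $\epsilon$ by a constant, which only affects the hidden constant in $O(D/\epsilon)\ln m$) gives the claimed guarantee. I expect the movement-cost potential argument to be the delicate step: one must choose a potential (entropy of $p^t$, or $\sum_i p_i^t \ln(p_i^t/q_i)$ for a suitable reference $q$) that simultaneously dominates each step's earth-mover distance and telescopes cleanly against the weight updates, so that the terms induced by the losses $c_i^t$ cancel against the service cost already accounted for.
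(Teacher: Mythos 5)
The paper itself does not reprove this statement; Theorem~\ref{thm:simulation_randomized} is cited directly from Blum and Burch, so I am comparing your proposal against the standard proof of that result.

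Your decomposition into service cost and movement cost is right, and your service-cost analysis is the correct Weighted-Majority potential argument. The gap is in the movement-cost step. You recognize that the key is to bound $\sum_t D\cdot\dist(p^t,p^{t+1})$, but you then sketch an ``entropy-like potential'' that you explicitly do not work out, and it is far from clear that the potentials you propose (entropy of $p^t$ or a relative entropy against a fixed reference $q$) telescope cleanly against the multiplicative weight updates to give a bound that is simultaneously $O(\ln m/\epsilon)$ plus a small multiple of $\min_i\cost(A_i)/D$. The difficulty is that the movement at time $t$ needs to be compared to the cost suffered at time $t$, and an aggregate potential on $p^t$ does not by itself connect those two quantities.

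The Blum--Burch argument sidesteps this entirely with a direct per-step inequality. Since $W^{t+1}\le W^t$, one has $p_i^{t+1}=w_i^t\beta^{c_i^t/D}/W^{t+1}\ge p_i^t\beta^{c_i^t/D}$, hence
$$\max\bigl(0,\,p_i^t-p_i^{t+1}\bigr)\le p_i^t\bigl(1-\beta^{c_i^t/D}\bigr)\le p_i^t\,\frac{c_i^t}{D}\ln\frac{1}{\beta},$$
using $1-\beta^x\le x\ln(1/\beta)$. Summing over $i$ gives $\dist(p^t,p^{t+1})\le\frac{\ln(1/\beta)}{D}\sum_i p_i^t c_i^t$, i.e.\ the step-$t$ earth-mover distance is bounded by $\ln(1/\beta)/D$ times the step-$t$ expected service cost. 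Multiplying by $D$ and summing over $t$, the total movement cost is at most $\ln(1/\beta)\approx\epsilon/2$ times the total service cost, which you have already bounded. No separate telescoping potential is needed; the two pieces are coupled pointwise in time. If you replace the entropy-potential sketch with this per-step inequality, the rest of your write-up (rescaling $\epsilon$ to absorb the extra $O(\epsilon)$ factor) goes through as stated.
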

Combined with Lemma~\ref{lem:ftp}, we obtain Theorem~\ref{thm:ub-rand}.

\subsection{Lower bound}
\label{sec:lowerbounds}

We show that our upper bounds for general metrical task systems (Theorems \ref{thm:ub-det} and \ref{thm:ub-rand})
are tight up to constant factors.
We show this for MTS on a uniform metric, i.e., the metric where the
distance between any two points is $1$.

\begin{theorem}\label{thm:generalLB}
For $\bar \eta\ge 0$ and $n\in\N$, every deterministic (or randomized) online algorithm for MTS on the $n$-point uniform metric with access to an action prediction oracle with error at most $\bar\eta\cdot\opt$ with respect to some optimal offline algorithm has competitive ratio $\Omega\left(\min\left\{\alpha_n, 1+\bar{\eta}\right\}\right)$, where $\alpha_n=\Theta(n)$ (or $\alpha_n=\Theta(\log n)$) is the optimal competitive ratio of deterministic (or randomized) algorithms without prediction.
\end{theorem}
\begin{proof}
	For deterministic algorithms, we construct an input sequence consisting of phases defined as follows. We will ensure that the online and offline algorithms are located at the same point at the beginning of a phase. The first $\min\{n-2,\lfloor\bar\eta\rfloor\}$ cost functions of a phase always take value $\infty$ at the old position of the online algorithm and value $0$ elsewhere, thus forcing the algorithm to move. Let $p$ be a point that the online algorithm has not visited since the beginning of the phase. Only one more cost function will be issued to conclude the phase, which takes value $0$ at $p$ and  $\infty$ elsewhere, hence forcing both the online and offline algorithms to $p$. The optimal offline algorithm suffers a cost of exactly $1$ per phase because it can move to $p$ already at the beginning of the phase. The error is at most $\bar\eta$ per phase provided that point $p$ is predicted at the last step of the phase, simply because there are only at most $\bar\eta$ other steps in the phase, each of which can contribute at most $1$ to the error. Thus, the total error is at most $\bar\eta\opt$. The online algorithm suffers a cost $\min\{n-1,1+\lceil\bar\eta\rceil\}$ during each phase, which proves the deterministic lower bound.
	
	For randomized algorithms, let $k:=\lfloor\log_2 n\rfloor$ and fix a subset $F_0$ of the metric space of $2^k$ points. We construct again an input sequence consisting of phases: For $i=1,\dots,\min\{k,\lfloor\bar\eta\rfloor\}$, the $i$th cost function of a phase takes value $0$ on some set $F_i$ of feasible states and $\infty$ outside of $F_i$. Here, we define $F_i\subset F_{i-1}$ to be the set consisting of the half of the points of $F_{i-1}$ where the algorithm's probability of residing is smallest right before the $i$th cost function of the phase is issued (breaking ties arbitrarily). Thus, the probability of the algorithm already residing at a point from $F_i$ when the $i$th cost function arrives is at most $1/2$, and hence the expected cost per step is at least $1/2$. We assume that $\bar\eta\ge 1$ (otherwise the theorem is trivial). Similarly to the deterministic case, the phase concludes with one more cost function that forces the online and offline algorithms to some point $p$ in the final set $F_i$. Again, the optimal cost is exactly $1$ per phase, the error is at most $\bar\eta$ in each phase provided the last prediction of the phase is correct, and the algorithm's expected cost per phase is at least $\frac{1}{2}\min\{k,\lfloor\bar\eta\rfloor\}=\Omega(\min(\log n,1+\bar\eta))$, concluding the proof.
\end{proof}

In light of the previous theorem it may seem surprising that our algorithm \algname for caching (see Section~\ref{sec:paging}) achieves a competitive ratio logarithmic rather than linear in the prediction error, especially considering that the special case of caching when there are only $k+1$ distinct pages corresponds to an MTS on the uniform metric. However, the construction of the randomized lower bound in Theorem~\ref{thm:generalLB} requires cost functions that take value $\infty$ at several points at once, whereas in caching only one page is requested per time step.

\section{{Beyond Metrical Task Systems}}%
\label{sec:matching}

{The objective of this section is to show that the prediction setup
introduced in this paper is not limited to Metrical Task Systems, but can also
be useful for relevant problems not known to be inside this class. This
emphasizes the generality of our approach, compared to prediction setups
designed for a single problem. We focus on the \emph{online matching
on the line} problem, which has been studied for three decades and has seen recent
developments. }

In the \emph{online matching on the line} problem, we are given a set $S=\{s_1,s_2,\dots s_n\}$ of
server locations on the real line. A set of requests $R =
\{r_1,r_2,\dots r_n \}$ which are also locations on the real line,
arrive over time. Once request $r_i$ arrives, it has to be irrevocably
matched to some previously unmatched server $s_j$. The cost of this
edge in the matching is the distance between $r_i$ and $s_j$, i.e.,
$|s_j-r_i|$ and the total cost is given by the sum of all such edges
in the final matching, i.e., the matching that matches every request in $R$ to some unique
server in $S$. The objective is to minimize this total cost.

The best known lower bound on the competitive ratio of any
deterministic algorithm is $9.001$~\citep{FuchsHK05} and the best known
upper bound for any algorithm is $O(\log n)$, due
to~\citet{Raghvendra18}.

We start by defining the notion of \emph{distance} between two sets of servers.

\begin{definition}
Let $P_i^1$ and $P_i^2$ be two sets of points in a metric space, of
size $i$ each. We then say that their distance $\dist(P_i^1,P_i^2)$ is
equal to the cost of a minimum-cost perfect matching in the bipartite
graph having $P_i^1$ and $P_i^2$ as the two sides of the bipartition.
\end{definition}

In \emph{online matching on the line with action predictions} we assume that, in
each round $i$ along with request $r_i$, we obtain a prediction
$P_i\subseteq S$ with $|P_i|=i$ on the server set that the offline
optimal algorithm is using for the first $i$ many requests. We allow here even that $P_i\not\subseteq P_{i+1}$. 
The error in round $i$ is given by $\eta_i:= dist(P_i,\off_i)$, where
$\off_i$ is the server set of a (fixed) offline algorithm on the instance.
The total prediction error
is $\eta = \sum_{i=1}^n\eta_i$.

Since a request has to be irrevocably matched to a server, it is not
straightforward that one can switch between configurations of
different algorithms. Nevertheless, we are able to simulate such a
switching procedure. By applying this switching procedure to the best
known classic online algorithm for the problem, due to
\citet{Raghvendra18}, and designing a Follow-The-Prediction
algorithm that achieves a
competitive ratio of $1+2\eta/\off$, we can  apply the combining method of
Theorem~\ref{thm:simulation_deterministic} to get the following result.

\begin{theorem*}[Restated Theorem~\ref{thm:matching}]
  There exists a deterministic algorithm for the online matching on the line
  problem with action predictions that attains a competitive ratio of
  \begin{align*}
    \min\{O(\log n), 9+\frac{8e\eta}{\off}\},
  \end{align*}
  for any offline algorithm \off.
\end{theorem*}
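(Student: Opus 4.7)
The plan is to prove Theorem~\ref{thm:matching} by constructing a Follow-the-Prediction ($\ftp$) algorithm for online matching on the line that is $(1+\frac{2\eta}{\off})$-competitive, and then combining $\ftp$ with Raghvendra's $O(\log n)$-competitive algorithm via a deterministic schedule analogous to $MIN^{det}$ from Section~\ref{sec:combining}. The new difficulty, relative to the MTS setting, is that matches are irrevocable: once a request is assigned to a server, that assignment is fixed forever. Consequently, switching between algorithms is not a primitive operation at metric cost and must be simulated.

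First, I would define $\ftp$ as follows: on request $r_i$ with prediction $P_i\subseteq S$ (a set of $i$ servers), $\ftp$ matches $r_i$ to a server in $P_i$ chosen to be consistent with an optimal extension of the matching it has already committed to. When $P_{i-1}\subseteq P_i$, this reduces to matching $r_i$ to the unique server in $P_i\setminus P_{i-1}$. To analyze, I would adapt the hybrid argument of Lemma~\ref{lem:ftp}: define $B_t$ to be the algorithm that agrees with $\ftp$ for the first $t$ requests and with $\off$ thereafter. Using the triangle inequality on the line together with the two minimum-cost matchings certifying $\eta_{t-1}=\dist(P_{t-1},O_{t-1})$ and $\eta_t=\dist(P_t,O_t)$, one obtains $\cost(B_t)-\cost(B_{t-1})\le 2\eta_t$; telescoping yields $\cost(\ftp)\le \off+2\eta$.

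Next I would develop the switching simulation. Both candidate algorithms are executed in the background, each on its own copy of the server set and each maintaining a hypothetical matching. The actually committed matching follows the currently active algorithm's recommendation whenever the suggested server is still free, and otherwise re-routes $r_i$ to an available server found via an alternating-path argument between the committed matching and the active algorithm's hypothetical matching. The key lemma here is that the total re-routing overhead incurred while an algorithm is active is bounded by a constant multiple of that algorithm's own hypothetical cost over the active interval. This relies crucially on the one-dimensional structure of the line metric, where optimal matchings are order-preserving and symmetric differences of matchings decompose into alternating paths of controlled total cost.

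Finally, I would plug this simulation into the cow-path $MIN^{det}$ schedule with $m=2$, whose standard analysis yields a base multiplicative factor of $9$; accounting for the re-routing overhead inflates the $\eta$-term by a further constant, which after optimizing the cow-path parameter produces the claimed $9+\frac{8e\eta}{\off}$. The overall competitive ratio is then the minimum of this quantity and Raghvendra's $O(\log n)$. The principal technical obstacle I expect is the re-routing bound: MTS states can be switched at a cost equal to their metric distance, whereas irrevocable matches force one to reason about alternating paths on the line to bound the excess cost accumulated across arbitrarily many switches. Once this lemma is in place, the remainder proceeds exactly as in Theorem~\ref{thm:simulation_deterministic}, together with the $\ftp$ guarantee proved in the second paragraph.
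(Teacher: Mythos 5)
Your high-level plan — build a Follow-the-Prediction algorithm achieving $1+2\eta/\off$ and then combine it with Raghvendra's algorithm by the cow-path schedule — is the paper's plan, and you correctly identify irrevocability as the central obstacle. However, both of the two places where you would need to overcome that obstacle are left as unproved claims, and in each case the paper takes a cleaner, already-closed route.

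For the $\ftp$ analysis, you propose a hybrid-algorithm argument ($B_t$ agrees with $\ftp$ for $t$ steps then with $\off$), as in Lemma~\ref{lem:ftp} for MTS. But irrevocability makes $B_t$ ill-defined: after $t$ steps $B_t$ occupies $\ftp$'s configuration $S_t$, not $\off$'s configuration $O_t$, so "agree with $\off$" is not an executable policy — the server $\off$ would use at step $t+1$ may already be committed by $B_t$. To define $B_t$ you would need exactly the configuration-translation machinery (a matching $\mu_t$ between $S_t$ and $O_t$, and a rule for matching $r_{t+1}$ to $\mu_t$ of $\off$'s choice), at which point you have essentially reconstructed the paper's actual proof. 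The paper avoids the hybrid family entirely: it defines $\ftp$ by matching $r_{i+1}$ to $\mu_i(s)$ where $s$ is determined by the min-cost matching between $P_{i+1}$ and $P_i \cup \{r_{i+1}\}$ and $\mu_i$ is the min-cost matching between $S_i$ and $P_i$, and uses the potential $\Phi_i = \dist(S_i, P_i)$; it then shows $FtP_i + \Delta\Phi_i \leq \dist(P_{i+1}, P_i\cup\{r_{i+1}\}) \leq \eta_i + \cost(\off\text{ at step }i{+}1) + \eta_{i+1}$, which telescopes to $\off + 2\eta$. Your asserted bound $\cost(B_t)-\cost(B_{t-1})\le 2\eta_t$ is not justified and also misses the $\off$-cost term that appears in the paper's per-step bound; the paper's telescoping works out to $\off+2\eta$ only because $\Phi$ cancels.

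For the combination, you posit a new lemma — that alternating-path re-routing overhead while following one algorithm is bounded by a constant multiple of that algorithm's cost — and you yourself flag it as the principal technical obstacle. This lemma is not in the paper and is not needed. The paper instead proves Lemma~\ref{lem:distance}: for any two algorithms $A,B$ and any prefix of requests, $\dist(A_i, B_i) \le \cost(M_i^A) + \cost(M_i^B)$, by the triangle inequality through the requests, with no dependence on the one-dimensional structure (which is why the paper also gets the bipartite metric matching result essentially for free). Combined with the same potential-function bookkeeping as in $\ftp$ (maintain $\dist(\text{actual config}, \text{virtual config})$ as a potential; per-step amortized cost equals the virtual configuration's movement), Lemma~\ref{lem:distance} bounds each switch by $\cost(A)+\cost(B)$, which is exactly the assumption needed by the generalized Theorem~\ref{thm:simulation_deterministic_appendix}. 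Your line-specific alternating-path route would, even if completed, yield a weaker constant and would not generalize to bipartite metric matching. So the gap is concrete: the two load-bearing claims in your proposal (the hybrid bound and the re-routing lemma) are both unproven, and both are replaced in the paper by a single potential-function argument plus Lemma~\ref{lem:distance}.
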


We note that for some instances the switching cost between these two
algorithms (and therefore, in a sense, also the metric space diameter) can be as high as $\Theta(\opt)$ which renders the
randomized combination uninteresting for this particular problem.

\subsection{A potential function}

We define the \emph{configuration} of an
algorithm at some point in time as the set of servers which are currently
matched to a request.

For each round of the algorithm, we define $S_i$ as the current configuration and $P_i$ as the predicted configuration, which verify $|S_i|=|P_i|= i$. We define a potential function after each round $i$ to be
$\Phi_i = \dist(S_i,P_i)$, and let 
$\mu_i$ be the associated matching between $S_i$ and $P_i$ that
realizes this distance, such that all servers in $S_i \cap P_i$ are matched to
themselves for zero cost.
We extend $\mu_i$ to the complete set of severs $S$ by setting $\mu_i(q)=q$ for
all $q\notin S_i\cup P_i$.
The intuition behind the potential function is
that after round $i$ one can simulate being in configuration $P_i$
instead of the actual configuration $S_i$, at an additional expense of
$\Phi_i$.

\subsection{Distance among different configurations}
The purpose of this section is to show that the distance among the
configurations of two algorithms is at most the sum of their current
costs. As we will see, this will imply that we can afford switching
between any two algorithms.

We continue by bounding the distance between any two algorithms as a
function of their costs.
\begin{lemma}
  \label{lem:distance} Consider two algorithms $A$ and $B$, and fix
the set of servers $S$ as well as the request sequence $R$. Let
$A_i$ and $B_i$ be the respective configurations of the algorithms
(i.e., currently matched servers) after serving the first $i$
requests of $R$ with servers from $S$. Furthermore, let $\opt_i^A$
(resp. $\opt_i^B$) be the optimal matching between $\{r_1,r_2,\dots r_i\}$
and $A_i$ (resp. $B_i$), and let $M_i^A$  (resp. $M_i^B$) be the
corresponding matching produced by $A$ (resp. $B$). Then:
  \begin{align*}
    \dist(A_i,B_i) &\le \cost(\opt_i^A) + \cost(\opt_i^B)\\
     &\le \cost(M_i^A) + \cost(M_i^B).
  \end{align*}
\end{lemma}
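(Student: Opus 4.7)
The plan is to establish the two inequalities separately, starting with the easier second one and then composing the two optimal matchings through the common request set to obtain the first.

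For the second inequality, observe that $M_i^A$ is, by definition, a perfect matching between the first $i$ requests $\{r_1,\dots,r_i\}$ and the server set $A_i$ (it is precisely the matching produced by algorithm $A$). Since $\opt_i^A$ is the \emph{minimum-cost} such matching, we immediately get $\cost(\opt_i^A) \le \cost(M_i^A)$. The same argument applied to $B$ yields $\cost(\opt_i^B) \le \cost(M_i^B)$, and summing gives the second inequality.

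For the first inequality, I would construct an explicit perfect matching $\nu$ between $A_i$ and $B_i$ by composing $\opt_i^A$ and $\opt_i^B$ through the shared request set. Concretely, for each request $r_j$ with $j \le i$, let $a_j \in A_i$ be the server matched to $r_j$ by $\opt_i^A$, and let $b_j \in B_i$ be the server matched to $r_j$ by $\opt_i^B$; then define $\nu(a_j) := b_j$. Since both $\opt_i^A$ and $\opt_i^B$ are perfect matchings on the same set of $i$ requests against $A_i$ and $B_i$ respectively, the map $\nu$ is a bijection from $A_i$ to $B_i$, hence a perfect matching between them. By the triangle inequality on the line,
\[ |a_j - b_j| \;\le\; |a_j - r_j| + |r_j - b_j|, \]
and summing over $j = 1, \dots, i$ gives $\cost(\nu) \le \cost(\opt_i^A) + \cost(\opt_i^B)$. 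Since $\dist(A_i, B_i)$ is the cost of a \emph{minimum}-cost perfect matching between $A_i$ and $B_i$, we conclude $\dist(A_i, B_i) \le \cost(\nu) \le \cost(\opt_i^A) + \cost(\opt_i^B)$.

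There is no real obstacle here: the proof is essentially a triangle-inequality argument. The only thing worth being careful about is verifying that $\nu$ is well-defined as a bijection, which follows from both $\opt_i^A$ and $\opt_i^B$ being perfect matchings on the common index set $\{1,\dots,i\}$; once this is noted, the rest is a one-line computation.
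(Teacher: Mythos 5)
Your proof is correct and follows essentially the same route as the paper's: both establish the second inequality by optimality of $\opt_i^A, \opt_i^B$ over the matchings actually produced, and both obtain the first by composing the two optimal matchings through the common request set $\{r_1,\dots,r_i\}$ and applying the triangle inequality termwise. The only difference is notational ($a_j, b_j$ versus the paper's $s_j^A, s_j^B$).
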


\begin{proof}
  The second inequality follows by the optimality of $\opt_i^A$ and
  $\opt_i^B$. For the first inequality 
let $s_j^A$
(resp. $s_j^B$) be the server matched to $r_j$ by $\opt_i^A$ (resp.
$\opt_i^B$), for all $j\in \{1,\dots, i\}$. 
Therefore, 
there exists a matching between $A_i$ and $B_i$ that matches for all
$j\in\{1,\dots,i\}$, $s_j^A$ to $s_j^B$ which has a total cost of 
  \begin{align*}
  \sum_{j=1}^i \dist(s_j^A-s_j^B) &\le \sum_{j=1}^i \dist(s_j^A
    - r_j) + \sum_{j=1}^i \dist(s_j^B-r_j) \\
    &= \cost(\opt_i^A) + \cost(\opt_i^B),
  \end{align*}
where the inequality follows by the triangle inequality.
By the definition of distance we have that $\dist(A_i,B_i) \le
\sum_{j=1}^i \dist(s_j^A-s_j^B)$, which concludes the proof.
\end{proof}

\subsection{Follow-The-Prediction}

Since \emph{online matching on the line} is not known to be in MTS, we start
by redefining the algorithm Follow-The-Prediction for this particular
problem. In essence, the algorithm virtually switches from predicted
configuration $P_i$ to predicted configuration $P_{i+1}$.

 Let $S_i$ be the actual set of servers used by
Follow-The-Prediction after round $i$. 
Follow-The-Prediction 
computes the
optimal matching among $P_{i+1}$ and the multiset $P_i\cup\{r_{i+1}\}$
which maps the elements of $P_{i+1} \cap P_i$ to themselves.
Note that if $r_{i+1} \in P_i$, then $P_i\cup\{r_{i+1}\}$ is a multiset
where $r_{i+1}$ occurs twice.
Such matching will match $r_{i+1}$ to some server $s\in P_{i+1}\setminus P_i$. 
Recall that $\mu_i$ is the minimum cost bipartite matching
between $S_i$ and $P_i$ extended by zero-cost edges to the whole set of
servers. Follow-The-Prediction matches $r_{i+1}$ to the
server $\mu_i(s)$, i.e., to the server to which $s$ is matched to under $\mu_i$.
We can show easily that $\mu(s) \notin S_i$.
Since $s\notin P_i$, there are two possibilities:
If $s\notin S_i$, then $\mu(s) = s \notin S_i$ by extension of $\mu_i$ to
elements which do not belong to $S_i$ nor $P_i$.
Otherwise, $s \in S_i \setminus P_i$ and, since $\mu_i$ matches all
the elements of $S_i \cap P_i$ to themselves,
we have $\mu(s) \in P_i \setminus S_i$.

\begin{theorem}
  Follow-The-Prediction has total matching cost at most $\off+2\eta$ and
  therefore the algorithm has a competitive ratio of
  \begin{align*}
    1 + 2\eta/\off
  \end{align*}
against any offline algorithm \off.
\end{theorem}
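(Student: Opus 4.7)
My plan is to use the potential $\Phi_i = \dist(S_i, P_i)$ from the excerpt and to prove the per-step amortized inequality
$$ |r_{i+1} - \mu_i(s)| + \Phi_{i+1} - \Phi_i \;\le\; \dist(P_{i+1},\, P_i \cup \{r_{i+1}\}), $$
where $s \in P_{i+1}\setminus P_i$ denotes the server that the optimal matching $M^*_{i+1}$ between $P_{i+1}$ and $P_i\cup\{r_{i+1}\}$ pairs with $r_{i+1}$, so that the left-hand side is the amortized $\ftp$ cost at step $i+1$. Summing this over $i=0,\dots,T-1$ with $\Phi_0=0$ and $\Phi_T\ge 0$ gives $\cost(\ftp)\le\sum_{i=1}^T \dist(P_i, P_{i-1}\cup\{r_i\})$. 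Each term on the right I then bound by two applications of the triangle inequality for set distance:
$$ \dist(P_i, P_{i-1}\cup\{r_i\}) \;\le\; \dist(P_i,\off_i) + \dist(\off_i,\, P_{i-1}\cup\{r_i\}) \;\le\; \eta_i + \eta_{i-1} + |o_i - r_i|, $$
where the second inequality uses the matching that pairs $o_i$ with $r_i$ and extends the optimal $\off_{i-1}\leftrightarrow P_{i-1}$ matching of cost $\eta_{i-1}$. Summing and using $\eta_0=0$, the $\eta$ terms contribute at most $2\eta$ and the $|o_i-r_i|$ terms contribute $\cost(\off)$, yielding the claimed $\cost(\ftp)\le\cost(\off)+2\eta$.

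The core of the argument is the amortized inequality, which I reduce to constructing a matching $\nu\colon P_{i+1}\to S_{i+1}$ of cost at most $\cost(\pi)+\Phi_i-|s-\mu_i(s)|$, where $\pi:=M^*_{i+1}|_{P_{i+1}\setminus\{s\}}$ is a bijection onto $P_i$ of cost $\cost(M^*_{i+1})-|r_{i+1}-s|$ that self-matches $P_i\cap P_{i+1}$. The naive composition $\nu_0(q):=\mu_i(\pi(q))$ for $q\ne s$ and $\nu_0(s):=\mu_i(s)$, with two triangle-inequality applications along $q\to\pi(q)\to\mu_i(\pi(q))$, only yields $\cost(\nu_0)\le\cost(\pi)+\Phi_i+|s-\mu_i(s)|$, so a further improvement is needed.

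The refinement hinges on the observation that $s$ always lies in $S_{i+1}\cap P_{i+1}$ (either $s\in S_i\subseteq S_{i+1}$, or else $\mu_i(s)=s$ has just been added to produce $S_{i+1}$), so $s$ can be self-matched at zero cost in $\nu$. A short case analysis on whether $\mu_i(s)\in P_{i+1}$ or not shows that this self-matching can be obtained via a local swap of edges in $\nu_0$ that saves exactly $2|s-\mu_i(s)|$ in cost: if $\mu_i(s)\in P_i\cap P_{i+1}$, one additionally self-matches $\mu_i(s)\to\mu_i(s)$ and removes the pair of edges $s\leftrightarrow\mu_i(s)$ from $\nu_0$; if instead $\mu_i(s)\in P_i\setminus P_{i+1}$, one reroutes $q':=\pi^{-1}(\mu_i(s))\in(P_{i+1}\setminus P_i)\setminus\{s\}$ directly to $\mu_i(s)$. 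Plugging $\Phi_{i+1}\le\cost(\nu)\le\cost(\pi)+\Phi_i-|s-\mu_i(s)|$ into the triangle bound $|r_{i+1}-\mu_i(s)|\le|r_{i+1}-s|+|s-\mu_i(s)|$, together with the identity $|r_{i+1}-s|+\cost(\pi)=\cost(M^*_{i+1})$, makes the $|s-\mu_i(s)|$ terms cancel and delivers the amortized inequality.

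The main obstacle I foresee is the case analysis in constructing $\nu$: one has to check that every swap preserves bijectivity and that the accounting saves precisely $2|s-\mu_i(s)|$ without double-counting any edge of $\mu_i$ or $\pi$. The case $\mu_i(s)\in P_i\setminus P_{i+1}$ is the more delicate one, because after the reroute one must verify that $\mu_i\circ\pi$ restricted to $P_{i+1}\setminus\{s,q'\}$ is a bijection onto $S_i\setminus\{s\}$, so that combined with $s\to s$ and $q'\to\mu_i(s)$ the map $\nu$ is indeed a bijection onto $S_{i+1}$.
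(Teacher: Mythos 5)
Your proof is correct, and it takes a genuinely different route from the paper's for the core amortized inequality
\[ |r_{i+1}-\mu_i(s)| + \Phi_{i+1} - \Phi_i \le \dist(P_{i+1}, P_i\cup\{r_{i+1}\}). \]
The paper reduces this to a triangle inequality for the set-distance via a complement trick: it rewrites $\Phi_i=\dist(S_i,P_i)=\dist(\bar S_i,\bar P_i)$, peels the edge $(s,\mu_i(s))$ off the optimal complementary matching to get $\Phi_i=\dist(S_{i+1},P_i\cup\{s\})+|s-\mu_i(s)|$, peels $(s,r_{i+1})$ off the matching defining $\dist(P_{i+1},P_i\cup\{r_{i+1}\})$, and then closes the gap with the triangle inequality for the set-distance together with $|r_{i+1}-\mu_i(s)|\le|r_{i+1}-s|+|s-\mu_i(s)|$. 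You instead stay on the primal side and explicitly construct a matching $\nu\colon P_{i+1}\to S_{i+1}$ witnessing $\Phi_{i+1}\le\cost(\pi)+\Phi_i-|s-\mu_i(s)|$, starting from the naive composition $\mu_i\circ\pi$ and then recovering the needed $-2|s-\mu_i(s)|$ savings by a local edge swap justified by the observation $s\in S_{i+1}\cap P_{i+1}$. I verified the case analysis you flag as the main obstacle: in both the $\mu_i(s)\in P_i\cap P_{i+1}$ case and the $\mu_i(s)\in P_i\setminus P_{i+1}$ case the swap preserves bijectivity, and the $|s-\mu_i(s)|$ term inside $\Phi_i$ (the edge $(\mu_i(s),s)$ of $\mu_i$) is genuinely dropped from the upper bound along with the $|s-\mu_i(s)|$ that the naive $\nu_0(s)=\mu_i(s)$ would pay, giving exactly $-2|s-\mu_i(s)|$. (The case $s\notin S_i$ is trivial since then $\mu_i(s)=s$.) What the complement trick buys the paper is that the newly occupied server $\mu_i(s)$ is handled automatically — it simply disappears from $\bar S_i$ — whereas your construction must route a page to it by hand; what your approach buys is an elementary, self-contained argument that never leaves the configurations $S_j,P_j$. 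Your final conversion to $\off+2\eta$ via the triangle inequality through $\off_i$ is essentially the same as the paper's (the paper hops through $\off_{i+1}$ and $\off_i\cup\{r_{i+1}\}$, a cosmetic difference).
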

\begin{proof}

  The idea behind the proof is that, by paying the switching cost of
  $\Delta\Phi_i$ at each round, we can always virtually assume that we reside in
  configuration $P_{i}$. So whenever a new request $r_{i+1}$ and a
  new predicted configuration $P_{i+1}$
  arrive, we pay the costs for switching from $P_i$ to $P_{i+1}$
  and for matching $r_{i+1}$ to a server in $P_{i+1}$. 

We first show that, for every round $i$, we have:

\begin{align*}
  FtP_i + \Delta\Phi_i &\leq \dist(P_{i+1}, P_i\cup\{r_{i+1}\})\\
\Leftrightarrow ~~ \dist(r_{i+1},\mu(s)) + \Phi_{i+1}  & \leq \dist(P_{i+1}, P_i\cup\{r_{i+1}\}) +  \Phi_i.
\end{align*}

Note that for all $j$, $\Phi_j = \dist(S_j, P_j) = \dist(\bar S_j, \bar P_j)$,
where $\bar S_j$ and $\bar P_j$ denote the complements
of $S_j$ and $P_j$ respectively.

We have in addition $\dist(\bar S_i, \bar P_i) = \dist(\bar S_i \setminus
\{\mu_i(s)\}, \bar P_i \setminus\{s\}) + \dist(s, \mu_i(s))$ as
$s\notin P_i$ and $\mu_i(s)\notin S_i$, and $(s,\mu_i(s))$ is an edge
in the min-cost matching between $\bar S_i$ and $\bar P_i$. Note that $S_{i+1} = S_i \cup \{\mu_i(s)\}$ so $\bar S_i \setminus \{\mu_i(s)\} = \bar S_{i+1}$. Therefore, we get:

 $$\Phi_i = \dist(\bar S_i, \bar P_i) = \dist(\bar S_{i+1},  \bar P_i \setminus\{s\}) + \dist(s, \mu_i(s))= \dist( S_{i+1},  P_i \cup\{s\}) + \dist(s, \mu_i(s)).$$

In addition, we have $\dist(P_{i+1}, P_i\cup\{r_{i+1}\}) = \dist(s,r_{i+1}) +
\dist(P_{i+1}\setminus \{s\}, P_i)$ because by definition of $s$,  $s$ is
matched to $r_{i+1}$ in a minimum cost matching between $P_{i+1}$ and $P_i\cup\{r_{i+1}\}$.
Now, $s\notin P_i$, so $\dist(P_{i+1}\setminus \{s\}, P_i) = \dist(P_{i+1}, P_i\cup \{s\})$ as this is equivalent to adding a zero-length edge from $s$ to itself to the associated matching. Therefore, we get:

$$\dist(P_{i+1}, P_i\cup\{r_{i+1}\}) = \dist(s,r_{i+1}) + \dist(P_{i+1}, P_i\cup \{s\}).$$

Combining the results above, we obtain:

\begin{align*}
  FtP_i + \Delta\Phi_i &\leq \dist(P_{i+1}, P_i\cup\{r_{i+1}\})\\
\Leftrightarrow ~~ dist(r_{i+1},\mu_i(s)) + \Phi_{i+1}  & \leq \dist(P_{i+1}, P_i\cup\{r_{i+1}\}) +  \Phi_i\\
\Leftrightarrow ~~ 
\dist(r_{i+1},\mu_i(s)) &+ \dist(S_{i+1}, P_{i+1})
\\\leq
\dist(S_{i+1},  &P_i \cup\{s\}) + \dist(s, \mu_i(s)) + 
\dist(s,r_{i+1}) + \dist(P_{i+1}, P_i\cup \{s\}) 
\end{align*}

The last equation holds by the triangle inequality.

Finally, we bound $\dist(P_{i+1}, P_i\cup\{r_{i+1}\})$ using the
triangle inequality. In the following $\off_i$ refers to the
configuration of offline algorithm \off after the first $i$ requests
have been served.

\begin{align*}
 &\dist(P_{i+1},P_i\cup\{r_{i+1}\}) \\&\leq
 \dist(P_i\cup \{r_{i+1}\},\off_i\cup \{r_{i+1}\})
    + \dist(\off_i\cup \{r_{i+1}\}, \off_{i+1}) +
    \dist(\off_{i+1},P_{i+1})\\  &\leq
 \eta_i + |\off_i| + \eta_{i+1}.
\end{align*}

Summing up over all rounds, and using that $\Phi_1=\Phi_n=0$  completes the proof of the theorem.
\end{proof}

\subsection{The main theorem}

The goal of this subsection is to prove Theorem~\ref{thm:matching}.
\begin{proof}[Proof of Theorem~\ref{thm:matching}]
  The main idea behind the proof is to show that we can apply
  Theorem~\ref{thm:simulation_deterministic}
  and virtually simulate the two algorithms (Follow-The-Prediction and the
  online algorithm of \citet{Raghvendra18}). 

  We need to show that we can assume that we are in some configuration
  and executing the respective algorithm, and that the switching cost
  between these configurations is upper bounded by the cost of the
  two algorithms. Similarly to the analysis of Follow-The-Prediction,
  we can virtually be in any configuration as long as we pay
  for the distance between any two consecutive configurations. When we
  currently simulate an algorithm $A$, the distance between the two
  consecutive configurations is exactly the cost of the edge that $A$
  introduces in this round. When we switch from the configuration of
  some algorithm $A$ to the configuration of some algorithm $B$, then
  by Lemma~\ref{lem:distance}, the distance between the two
  configurations is at most the total current cost of $A$ and $B$.
  
  This along with Theorem~\ref{thm:simulation_deterministic_appendix}
  (which is generalizing Theorem~\ref{thm:simulation_deterministic}
  beyond MTS and can be found in Appendix~\ref{sec:simulation}) concludes the proof.
 \end{proof}

\subsection{Bipartite metric matching}
\emph{Bipartite metric matching} is the generalization of online
matching on the line where the servers and requests can be points of
any metric space. The problem is known to have a tight
$(2n-1)$-competitive algorithm, due to
\citet{KalyanasundaramP93} as well as \citet{KhullerMV94}.

We note that our arguments in this section are not line-specific and
apply to that problem as well. This gives the following result:
\begin{theorem}
  There exists a deterministic algorithm for the online metric
  bipartite matching 
  problem with action predictions that attains a competitive ratio of
  \begin{align*}
    \min\{2n-1, 9+\frac{8e\eta}{\off}\},
  \end{align*}
against any offline algorithm \off.
\end{theorem}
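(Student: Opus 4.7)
The plan is to observe that essentially nothing in the analysis of the line case actually exploited the fact that the underlying metric is the real line. The construction of Follow-The-Prediction (choosing $s \in P_{i+1} \setminus P_i$ via a min-cost matching between $P_{i+1}$ and $P_i \cup \{r_{i+1}\}$, then matching $r_{i+1}$ to $\mu_i(s)$), the potential $\Phi_i = \dist(S_i, P_i)$ with its realizing matching $\mu_i$, Lemma~\ref{lem:distance}, and the telescoping bound showing FtP is $(1 + 2\eta/\off)$-competitive, all rely solely on the triangle inequality and basic properties of min-cost bipartite matchings. None of these arguments refer to any ordering of the points, so they lift verbatim to an arbitrary metric.

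Concretely, I would proceed in three steps. First, re-state the FtP algorithm and its analysis in the general metric setting, noting that each line of the proof of the FtP theorem is a triangle inequality or a manipulation of a min-cost bipartite matching decomposition, both of which are metric-agnostic; this yields a $(1 + 2\eta/\off)$-competitive FtP for bipartite metric matching. Second, invoke Lemma~\ref{lem:distance} as stated (its proof only uses the triangle inequality) to bound the switching cost between the configurations of any two algorithms by the sum of their current matching costs. Third, apply the generalized combination result (Theorem~\ref{thm:simulation_deterministic_appendix}) with $m = 2$, combining FtP with the $(2n-1)$-competitive algorithm of \citet{KalyanasundaramP93} / \citet{KhullerMV94}, which replaces the use of \citet{Raghvendra18} from the line case. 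Choosing $\gamma = 2$ yields the prefactor $9$ for $m=2$, and the generalized combination works because, as in the line case, the $\gamma^\ell$-bounded cost bursts of the currently simulated algorithm together with Lemma~\ref{lem:distance} suffice to pay for switches between configurations.

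The main obstacle, as in the line case, is verifying that Theorem~\ref{thm:simulation_deterministic} (designed for MTS, where configuration switches cost exactly the metric distance between states) is applicable here, since in online matching a request must be irrevocably matched; this is precisely handled by Lemma~\ref{lem:distance}, which shows that one can always simulate being in any algorithm's configuration by paying a cost bounded by the accumulated matching cost so far. Once this is in place, the combination gives competitive ratio $\min\{2n-1,\; 9 + 8e\eta/\off\}$ against any offline algorithm \off, where the constant $8e$ arises from the same calculation as in the proof of Theorem~\ref{thm:matching} (the combination overhead applied to the $2\eta/\off$ term of FtP, together with the factor stemming from Lemma~\ref{lem:distance} paying for the switches).
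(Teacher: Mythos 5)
Your proposal is correct and is essentially the same argument the paper uses: the paper notes in a one-paragraph remark that the analysis of Follow-the-Prediction, the potential $\Phi_i = \dist(S_i,P_i)$, and Lemma~\ref{lem:distance} rely only on the triangle inequality and min-cost matching decompositions, and then combines FtP with the $(2n-1)$-competitive algorithm of \citet{KalyanasundaramP93,KhullerMV94} via Theorem~\ref{thm:simulation_deterministic_appendix} with $m=2$ and $\gamma=2$. Your accounting of the constants also matches the intent of the paper (the exact bound from $m=2$ would give $9(1+2\eta/\off)=9+18\eta/\off$, which is even slightly stronger than the stated $9+8e\eta/\off$ obtained via the generic $2em$ bound).
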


\section{Experiments}\label{sec:experiments}

We evaluate the practicality of our approach on real-world datasets for two MTS: \emph{caching} and \emph{ice cream} problem. The source code and datasets are available at GitHub\footnote{\url{https://github.com/adampolak/mts-with-predictions}}. Each experiment was run $10$ times and we report the mean competitive ratios. The maximum standard deviation we observed was of the order of $0.001$.

\subsection{The caching problem}

\paragraph{Datasets.}

For the sake of comparability, we used the same two datasets as~\citet{LykourisV18}. 

\newcommand{\bk}{\texttt{BK}}
\newcommand{\citi}{\texttt{Citi}}

\begin{itemize}
	\item \bk\ dataset comes from a former social network BrightKite~\citep{BKpaper}.
  It contains checkins with user IDs and locations. We treat the sequence of checkin locations of each users as a separate instance of caching problem. We filter users with the maximum sequence length ($2100$) who require at least $50$ evictions in an optimum cache policy. Out of those we take the first $100$ instances.
  We set the cache size to $k=10$.
	
	\item \citi\ dataset comes from a bike sharing platform~\citet{citibike}.
  For each month of 2017, we consider the first $25\,000$ bike trips and build an
  instance where a request corresponds to the starting station of a trip.
  We set the cache size to $k=100$.
	
\end{itemize}

\paragraph{Predictions.}

We first generate the reoccurrence time predictions,
these predictions being used by previous
prediction-augmented algorithms. To this purpose, we use the same two
predictors as~\citet{LykourisV18}. Additionally we also
consider a simple predictor, which we call POPU (from \emph{popularity}),
and the LRU heuristic adapted to serve as a predictor.

\begin{itemize}
	\item Synthetic predictions: we first compute the exact reoccurrence time for each request, setting it to the end of the instance if it
does not reappear. We then add some noise drawn from a lognormal
distribution, with the mean parameter $0$ and the standard deviation $\sigma$, in
order to model rare but large failures. 

	\item PLECO predictions: we use the PLECO model described
in~\citet{PLECO}, with the same parameters as~\citet{LykourisV18},
which were fitted for the \texttt{BK} dataset (but not refitted for \texttt{Citi}).
This model estimates that a page requested $x$ steps earlier will be
the next request with a probability proportional to
$(x+10)^{-1.8}e^{-x/670}$. We sum the weights corresponding to all the earlier 
appearances of the current request to obtain the probability $p$ that this request is
also the next one. We then estimate that such a
request will reappear $1/p$ steps later.

	\item POPU predictions: if the current request has been seen in a
fraction $p$ of the past requests, we predict it will be repeated
$1/p$ steps later.

	\item LRU predictions: \citet{LykourisV18} already remarked on (but did not evaluate experimentally) a predictor that emulates the behavior of the LRU heuristic. A page requested at time $t$ is predicted to appear at time $-t$. Note that the algorithms only consider the order of predicted times among pages, and not their values, so the negative predictions pointing to the past are not an issue.

\end{itemize}

We then
transform the reoccurrence time predictions
to action predictions
by simulating the algorithm that evicts the element predicted
to appear the furthest in the future. In each step the prediction to
our algorithm is the configuration of this algorithm. %
Note that in the case of LRU predictions, the predicted configuration is precisely the configuration of the LRU algorithm.

\paragraph{Algorithms.}
We considered the following algorithms, whose competitive ratios are reported in Table~\ref{tbl:algorithms}.
	Two online algorithms:
 the heuristic LRU, which is considered the gold standard for caching, and %
 the $O(\log k)$-competitive Marker~\citep{FiatRR94}. %
	Three robust algorithms from the literature using the ``next-arrival time'' predictions: 
 L\&V~\citep{LykourisV18}, %
 LMarker~\citep{Rohatgi20}, and %
 LNonMarker~\citep{Rohatgi20}. %
	Three  algorithms using the prediction setup which is the focus of this paper:
 \ftp, which naively follows the predicted state,
	RobustFtP, which is defined as $MIN^{rand}(\text{\ftp},\text{Marker})$, and is an instance of the general MTS algorithm described in Section~\ref{sec:mts}, and
  \algname, the caching algorithm described in Section~\ref{sec:paging}.%

\begin{table}[tbp]
\small\centering
\begin{tabular}{@{}llll@{}}
  \toprule
  Algorithm & Competitive ratio & Property & Reference \\
	\midrule
  LRU & $k$ & & \citep{SleatorT85} \\
  Marker & $O(\log k)$ & Robust & \citep{FiatKLMSY91} \\
  FtP & $1 + 4\frac{\eta}{\opt}$ & {C+S} & Lemma~\ref{lem:ftp} \\
  L\&V & $2+O(\min\{\sqrt{\frac{\eta'}{\opt}}, \log k\})$ & {C+S+R} & \citep{LykourisV18}\\
  RobustFtP & $(1+\epsilon)  \min\{1+4\frac{\eta}{\opt}, O(\log k)\}$  & C+S+R  & Theorem~\ref{thm:ub-rand} \\
  LMarker & $O(1+\min\{\log\frac{\eta'}{\opt},\log k)\}$ & C+S+R  & \citep{Rohatgi20} \\
  LNonMarker & $O(1+\min\{1,\frac{\eta'}{k\cdot \opt}\}\log k)$ &  C+S+R  & \citep{Rohatgi20} \\
  \algname & $O(\min\{1+ \log(1+\frac{\eta}{\opt}), \log k\})$ &  C+S+R  & Theorem~\ref{thm:paging} \\
  \bottomrule
\end{tabular}
\caption{Summary of caching algorithms evaluated in experiments. Note that
$\eta$ and $\eta'$ are different measures of prediction error, so their
functions should not be compared directly. {Properties C+S+R mean
Consistency, Smoothness, and Robustness respectively.}}
\label{tbl:algorithms}
\end{table}

We implemented the deterministic and randomized combination schemes
described in Section~\ref{sec:combining} with a subtlety for the caching
problem: we do not flush the whole cache when switching algorithms, but perform only a single eviction per page fault in the same way as described in Remark~\ref{rmk:lazy}.
We set the parameters to $\gamma=1+0.01$ and {$\epsilon=0.5$. These values, chosen from $\{0.001,0.01,0.1,0.5\}$}, happen to be consistently the best choice in all our experimental settings.

\begin{figure}[tbh]
\centering
	\resizebox{.6\textwidth}{!}{\input{caching_bk_paper.pgf}}
	\caption{Comparison of caching algorithms augmented with synthetic predictions on the \texttt{BK} dataset.}
	\label{fig:pagingplot}
\end{figure}

\begin{table}[bth]
	\centering{
\begin{tabular}{l@{\qquad}ccc @{\qquad}ccc}
	\toprule
	\emph{Dataset} & \multicolumn{3}{c}{\texttt{BK}} & \multicolumn{3}{c}{\texttt{Citi}}\\
	\midrule
	LRU         & \multicolumn{3}{c}{1.291} & \multicolumn{3}{c}{1.848} \\
	Marker      & \multicolumn{3}{c}{1.333} & \multicolumn{3}{c}{1.861} \\
	\midrule
	\emph{Predictions}  & PLECO &  POPU &   LRU & PLECO &  POPU &  LRU \\
	\midrule
	\ftp                & 2.081 & 1.707 & 1.291 & 2.277 & 1.734 & 1.848 \\
	L\&V                & 1.340 & 1.262 & 1.291 & 1.877 & 1.776 & 1.848 \\
	LMarker             & 1.337 & 1.264 & 1.291 & 1.876 & 1.780 & 1.848 \\
	LNonMarker          & 1.333 & 1.292 & 1.299 & 1.862 & 1.771 & 1.855 \\
	\textbf{RobustFtP}  & 1.338 & 1.316 & 1.297 & 1.862 & 1.831 & 1.849 \\
	\textbf{\algname}   & 1.292 & 1.276 & 1.291 & 1.847 & 1.775 & 1.849 \\
	\bottomrule
\end{tabular}
}
\caption{Competitive ratios of caching algorithms using PLECO, POPU, and LRU predictions on both datasets.}
\label{tbl:paging}
\end{table}

\paragraph{Results.}
For both datasets, for each algorithm and each prediction considered, we
computed the total number of page faults over all the instances and divided it by the optimal number in order to obtain a \emph{competitive ratio}.
Figure~\ref{fig:pagingplot} presents the performance of a selection of the
algorithms depending on the noise of synthetic predictions for the
\texttt{BK} dataset. We omit LMarker and LNonMarker for readability since they
perform no better than L\&V. 
This experiment {shows} that our algorithm \algname outperforms previous
prediction-based algorithms as well as LRU {on the \texttt{BK} dataset with such predictions}.
Figures~\ref{fig:cachingplotbk} and~\ref{fig:cachingplotciti}
present the performance of all algorithms on the \texttt{BK} and \texttt{Citi}
datasets, respectively.
 {On the \texttt{Citi} dataset (Figure~\ref{fig:cachingplotciti}), \ftp achieves very good results even with a noisy synthetic predictor, and therefore {RobustFtP} surpasses other guaranteed algorithms. LNonMarker presents better performance for noisy predictions than the other algorithms.}

In Table~\ref{tbl:paging} we
provide the results obtained on both datasets using PLECO, POPU, and LRU
predictions. We observe that PLECO predictions are not accurate enough
to allow previously known algorithms to improve over the Marker
algorithm.
This may be due to the sensitivity of this predictor to consecutive identical requests, which are irrelevant for the caching problem.
However, using the simple POPU predictions enables the
prediction-augmented algorithms to significantly improve their
performance compared to the classical online algorithms. Using \algname with
either of the predictions is however sufficient to get a performance similar or
better than LRU (and than all other alternatives, excepted for POPU
predictions on the BK dataset). RobustFtP, although being a very
generic algorithm with worse theoretical guarantees than \algname, achieves a
performance which is not that far from previously known algorithms.
Note that we did not use a prediction model tailored to our setup,
which suggests that even better results can be achieved. 
When we use the LRU heuristic as a predictor, all the prediction-augmented algorithms perform comparably to the bare LRU algorithm.
For \algname and RobustFTP, there is a theoretical guarantee that this must be the case: Since the prediction error with respect to LRU is $0$, these algorithms are $O(1)$-competitive against LRU. Thus, \algname achieves both the practical performance of LRU with an exponentially better worst-case guarantee than LRU. Note that \citet{LykourisV18} also discuss how their algorithm framework performs when using LRU predictions, but did not provide both of these theoretical guarantees simultaneously.

\begin{figure}
  \centering
  \input{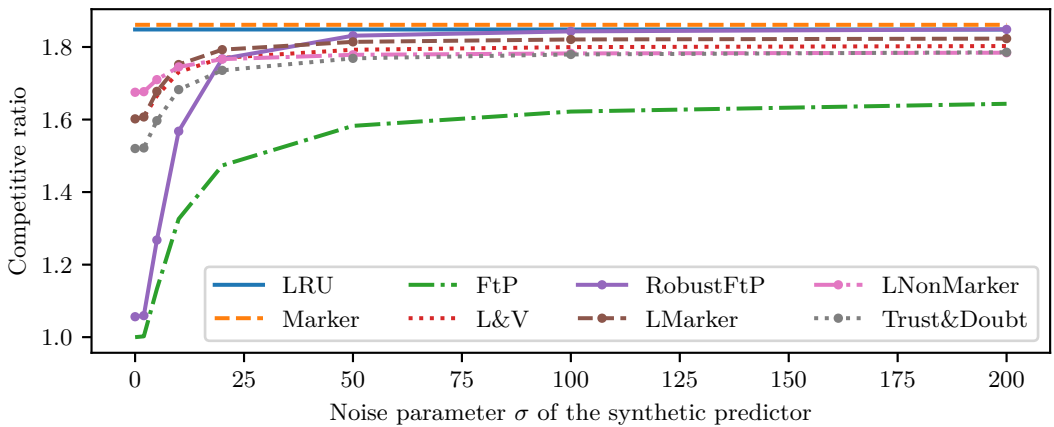}
  \caption{Comparison of caching algorithms augmented with synthetic predictions on the \texttt{BK} dataset.}
  \label{fig:cachingplotbk}
\end{figure}

\begin{figure}
  \centering
  \input{caching_citi_appendix.pgf}
  \caption{Comparison of caching algorithms augmented with synthetic predictions on the \texttt{Citi} dataset.}
  \label{fig:cachingplotciti}
\end{figure}

\subsection{A simple MTS: the \emph{ice cream} problem}

We consider a simple MTS example from \citet{chrobak98}, named \emph{ice cream} problem.
It it an MTS with two states, named $v$ and $c$, at distance $1$ from each other,
and two types of requests, $V$ and $C$. Serving a request while being in the matching state
costs $1$ for $V$ and $2$ for $C$, and the costs are doubled for the mismatched state.
The problem is motivated by an ice cream machine
which operates in two modes (states) -- vanilla or chocolate -- each facilitating a cheaper production of
a type of ice cream (requests).

We use the BrightKite dataset to prepare test instances for the problem. We extract the same $100$ users as for caching. For each user we look at the geographic coordinates of the checkins, and we issue a $V$ request for each checkin in the northmost half, and a $C$ request for each checkin in the southmost half.

In order to obtain synthetic predictions, we first compute the optimal offline policy, using dynamic programming. Then, for an error parameter $p$, for each request we follow the policy with probability $1-p$, and do the opposite with probability $p$.

We consider the following algorithms:
 the Work Function algorithm~\citep{BorodinLS92, BorodinEY1998}, of competitive
 ratio of $3$ in this setting ($2n-1$ in general);
 \ftp, defined in Section~\ref{sec:mts} (in case of ties in Equation~\ref{eq:ftp-def}, we follow the prediction); 
and the deterministic and randomized combination of the two above algorithms (with the same $\epsilon$ and $\gamma$ as previously) as proposed in Section~\ref{sec:mts}.

Figure~\ref{fig:MTS} presents the competitive ratios we obtained.
We can see that the general MTS algorithms we propose in Section~\ref{sec:mts}
allow to benefit from good
predictions while providing the worst-case guarantee of the classical online
algorithm. The deterministic and randomized combinations are comparable to the best of the
algorithms combined, and improve upon them when both algorithms have a similar performance.

\begin{figure}
\centering
  \input{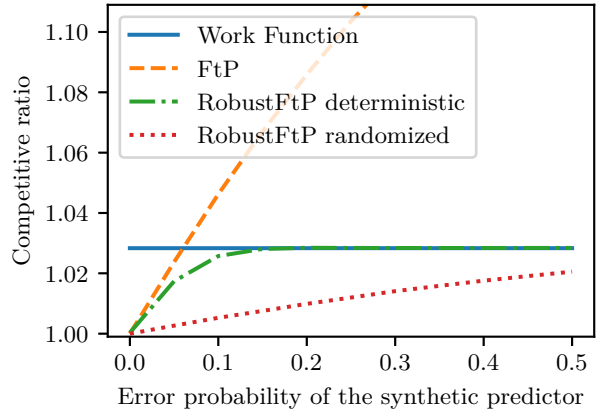}
  \caption{Performance on the ice cream problem with synthetic predictions.}
	\label{fig:MTS}
\end{figure}

\section{Conclusion}

In this paper, we proposed a prediction setup that allowed
us to design a general prediction-augmented algorithm for a large class of problems encompassing
MTS. For the MTS problem of caching in particular, the setup requires less information from the predictor
than previously studied ones {(since previous predictions can be converted to ours). Despite the more general setup,} we can design a specific algorithm
for the caching problem in our setup which offers {guarantees of a similar flavor to previous
algorithms} and even performs better in most of our experiments.

It may be considered somewhat surprising that a better bound is attainable for
caching than for general MTS, given that our lower bound instance for MTS uses
a uniform metric (and caching with a $(k+1)$-point universe also corresponds to a uniform metric). We conjecture logarithmic smoothness guarantees are
also attainable for other MTS problems with a request structure
similar to caching, like weighted caching and the $k$-server problem.
Further special cases of MTS can be obtained by restricting the number of
possible distinct requests (for example an MTS with two different possible
requests can model an important power management problem \cite{IraniSG03}),
or requiring a
specific structure from the metric space. Several such parametrizations of MTS
were considered by~\citet{BubeckR20} and it would be interesting to study
whether an improved dependence on the prediction error can be obtained in such
settings.

With respect to matching problems, there have been recent investigations through the lens of learning augmentation under specific matroid constraints in~\citep{AGKK20}, but this territory is still largely unexplored.
{It would also be interesting to evaluate our resource augmented algorithm for online metric matchings in real-life situations. As an example online matching algorithms are employed in several cities in order to match cars to parking spots (for example SFpark in San Francisco or ParkPlus in Calgary). Not only have such matching problems been studied from an algorithmic point of view (see, e.g., \citet{BenderG0P20}, \citet{BenderGP21}), but arguably it should be doable to generate high-quality predictions from historical data making our approach very promising.}

Another research direction is to identify more sophisticated predictors for caching and other problems that will further enhance the performance of prediction-augmented algorithms.

\bibliographystyle{abbrvnat}
\bibliography{draft}

\appendix

\section{Deterministic combination of a collection of algorithms}
\label{sec:simulation}

We consider a problem $P$ and $m$ algorithms $A_0, A_1, \dotsc, A_{m-1}$
for this problem which fulfill the following requirements.
\begin{itemize}
\item $A_0, \dotsc, A_{m-1}$ start at the same state and we are able
	to simulate the run of all of them simultaneously
\item for two algorithms $A_i$ and $A_j$, the cost of switching between
	their states is bounded by $\cost(A_i) + \cost(A_j)$.
\end{itemize}

\begin{theorem}[Restated Theorem~\ref{thm:simulation_deterministic};
generalization of Theorem 1 in~\citet{FiatRR94}]
\label{thm:simulation_deterministic_appendix}
Given $m$ on-line algorithms $A_0,\dots A_{m-1}$ for a problem $P$ which
satisfy the requirements above,
the algorithm $MIN^{det}$ with parameter $1< \gamma \leq 2$ incurs cost at most
\[ \left(\frac{2\gamma^{m}}{\gamma-1} + 1\right)
	\cdot \min_i\{ \cost(A_i(I))\}, \]
on any input instance $I$ such that $\opt_I \geq 1$.
If we choose $\gamma = \frac{m}{m-1}$, the coefficient
$\frac{2\gamma^{m}}{\gamma-1} + 1$ equals $9$ if $m=2$ and can be bounded
by $2em$.
\end{theorem}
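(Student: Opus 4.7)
My approach will proceed in two main steps: first, decompose the cost of $MIN^{det}$ into per-phase following and switching contributions and telescope the cross terms; second, bound the resulting geometric sum using the stopping rule applied to the best algorithm. Let $L$ be the phase during which the input ends and let $C^* := \min_i \cost(A_i(I))$.

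For the first step I will introduce $a_\ell := \cost(A_{\ell \bmod m})$ at the end of phase $\ell$ and $b_\ell := \cost(A_{(\ell+1) \bmod m})$ at the same instant. The cost of following $A_{\ell \bmod m}$ during phase $\ell \geq 1$ equals $a_\ell - b_{\ell-1}$, because that algorithm's total cost at the start of phase $\ell$ equals $b_{\ell-1}$. The switch at the end of phase $\ell$ costs at most $a_\ell + b_\ell$ by the switching-cost assumption. Summing these contributions for $\ell = 0, \dots, L$ makes every $b_\ell$ cancel and yields
\[
\cost(MIN^{det}(I)) \;\leq\; 2\sum_{\ell=0}^{L-1} a_\ell + a_L.
\]
The stopping rule directly gives $a_\ell \leq \gamma^\ell$ for every $\ell < L$, and $a_L \leq \gamma^L$ always.

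For the second step, let $i^* \in \arg\min_i \cost(A_i(I))$ and let $\ell_*$ be the largest completed phase with $\ell_* \equiv i^* \pmod m$; because phase $\ell_*$ was exited, $C^* \geq \cost(A_{i^*})_{\ell_*} > \gamma^{\ell_*}$. The tight case is $L \equiv i^* \pmod m$, where $\ell_* = L-m$ gives $\gamma^L < \gamma^m C^*$, but $MIN^{det}$ is following $A_{i^*}$ throughout phase $L$, so we obtain the sharper bound $a_L = \cost(A_{i^*})_L \leq C^*$. Plugging both into the previous display gives
\[
\cost(MIN^{det}(I)) \;\leq\; \frac{2\gamma^L}{\gamma-1} + C^* \;\leq\; \frac{2\gamma^m}{\gamma-1}\,C^* + C^* \;=\; \left(\frac{2\gamma^m}{\gamma-1} + 1\right) C^*,
\]
matching the claim. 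In the complementary case $L \not\equiv i^* \pmod m$, one instead has $\ell_* \geq L-m+1$ and therefore $\gamma^L < \gamma^{m-1} C^*$; combined with the cruder bound $a_L \leq \gamma^L$ this gives a strictly smaller ratio.

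The step I expect to require most care is spotting this case split: combining only the generic bounds $a_L \leq \gamma^L$ and $\gamma^L \leq \gamma^m C^*$ yields the weaker coefficient $\gamma^m(\gamma+1)/(\gamma-1)$ (for instance $12$ rather than $9$ when $m=2$, $\gamma=2$). The improvement relies on noticing that the only regime in which $\gamma^L$ can reach $\gamma^m C^*$ forces phase $L$ to follow the best algorithm itself, so that $a_L$ is controlled directly by $C^*$ instead of by the threshold $\gamma^L$. The remaining technicalities are the at-most-one-step overshoot of $a_\ell$ past $\gamma^\ell$ and the edge case $L < m$ (where no phase with index $i^*$ has completed), both absorbed into the competitive ratio via $C^* \geq \opt_I \geq 1$; and the optimization $\gamma = m/(m-1)$, which together with $(m/(m-1))^m \leq e$ reduces the coefficient to $2em$ for general $m$.
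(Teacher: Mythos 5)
Your proof is correct and follows essentially the same route as the paper's: the $a_\ell,b_\ell$ telescoping reproduces the paper's per-cycle bound $\gamma^{\ell-1}+\gamma^\ell$, and the case split on whether phase $L$ follows $A_{i^*}$ (using $a_L\le C^*$ there, versus $\gamma^L<\gamma^{m-1}C^*$ otherwise) matches the paper's distinction between $\min_j\cost(A_j)=\cost(A_{L\bmod m})$ and its complement. The only differences are notational, and you correctly identify the key refinement needed to reach the coefficient $\frac{2\gamma^m}{\gamma-1}+1$ rather than the cruder $\gamma^m\frac{\gamma+1}{\gamma-1}$.
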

Note that assumption on $\opt_I \geq 1$ is just to take care of the corner-case
instances with very small costs.
If we can only assume $\opt_I \geq c$ for some $0 < c < 1$,
then we scale all the costs fed to $MIN^{det}$ by $1/c$ and instances
with $\opt_I = 0$ are usually not very interesting.
The value of $c$
is usually clear from the particular problem in hand, e.g., for caching we only
care about instances which need at least one page fault, i.e., $\opt_I \geq 1$.
\begin{proof}
Let us consider the $\ell$-th cycle of the algorithm and denote
$i = \ell \bmod m$ and $i' = (\ell-1) \bmod m$.
We are switching from algorithm $A_{i'}$, whose current cost
we denote $\cost'(A_{i'}) = \gamma^{\ell-1}$ to $A_i$,
whose current cost we denote $\cost'(A_i)$,
and its cost at the end of this cycle will become
$\cost(A_i) = \gamma^\ell$. Our cost during this cycle, i.e.,
for switching and for execution of $A_i$, is at most
\[ \cost'(A_{i'}) + \cost'(A_i) + (\cost(A_i) - \cost'(A_i))
	= \cost'(A_{i'}) + \cost(A_i) = \gamma^{\ell-1} + \gamma^\ell.
\]

Now, let us consider the last cycle $L$, when we run the algorithm number
$i = L \bmod m$. By the preceding equation, the total cost of
$MIN^{det}$ can be bounded as
\[
\cost(MIN^{det})
	\leq 2 \cdot\sum_{\ell}^{L-1} \gamma^\ell + \cost(A_i)
	= 2 \frac{\gamma^L - 1}{\gamma -1} + \cost(A_i)
        \leq 2 \frac{\gamma^L}{\gamma-1} + \cost(A_i).
\]

If $L < m$, we use the fact that $\opt \geq 1$ and therefore the cost
of each algorithm processing the whole instance would be at least one.
Therefore, we have
\[ \cost(MIN^{det})
	\leq 2 \frac{\gamma^L}{\gamma-1} + \gamma^L
	\leq 2 \frac{\gamma^{m}}{\gamma-1} \cdot \min_i\{ \cost(A_i)\},
\]
because $\frac{\gamma^L}{\gamma-1} + \gamma^L = \frac{\gamma^{L+1}}{\gamma-1}$
and $L+1 \leq m$.

Now, we have $L \geq m$, denoting $i=L\bmod m$, and we distinguish two cases.

(1) If $\min_j\{ \cost(A_j)\} = \cost(A_i)$,
then $\cost(A_i) \geq \gamma^{L-m}$
for each $i$, and therefore
\[ \frac{\cost(MIN^{det})}{\min_j\{ \cost(A_j)\}}
	\leq \frac{2 \frac{\gamma^L}{\gamma-1} + \cost(A_i)}
		{\min_j\{ \cost(A_j)\}}
\]
Note that $\cost(A_i) \geq \gamma^{L-m}$, its cost from the previous usage.
Since $\min_j\{ \cost(A_j)\} = \cost(A_i)$, we get
\[ \frac{\cost(MIN^{det})}{\min_j\{ \cost(A_j)\}}
	\leq 2\frac{\gamma^{m}}{\gamma-1} + 1.
\]

(2) Otherwise, we have $\min_j\{ \cost(A_j)\} \geq \gamma^{L-m+1}$
and $\cost(A_j) \leq \gamma^L$ and therefore
\[ \frac{\cost(MIN^{det})}{\min_j\{ \cost(A_j)\}}
	\leq 2\frac{\gamma^{m-1}}{\gamma-1} + \gamma^{m-1}
	\leq 2\frac{\gamma^{m}}{\gamma-1}.
\]

For $\gamma = \frac{m}{m-1}$ we have
\[ 2\frac{\gamma^{m}}{\gamma-1} + 1 = 2 (m-1) \left(\frac{m}{m-1}\right)^{m}+1,
\]
which equals $9$ for $m=2$ and can be bounded by $2em$.

\end{proof}

\section{Comparison between \algname and the best marking algorithm}
\label{sec:TDvsMarking}

The algorithm \algname does not belong to the broad class of \emph{marking}
algorithms. We notice in this section that, given perfect predictions, this
property allows it to outperform all marking algorithms on some instances, but,
 at the same time, it does not always perform as well as the best marking
algorithm even when given perfect predictions.

\begin{remark}
With perfectly accurate predictions, there exist both a caching instance on
which \algname performs better than the best marking algorithm, and another
caching instance on which \algname is outperformed by a marking algorithm.
\end{remark}

\begin{proof}
We first build an instance where \algname, given predictions corresponding to
the optimal algorithm evicting the page arriving the furthest in the future,
outperforms the best marking algorithm. Consider a cache of size 3 and the
request sequence 1, 2, 3; 4, 5, 6; 1, 2, 3, composed of three phases of length
three (separated by semicolons). \algname keeps the pages 1 and 2 in cache
during the second phase so suffers seven cache misses. The best marking
algorithm is not able to keep such old pages in cache so suffers nine cache misses.

Now, we build an instance where \algname, given again predictions corresponding to
the optimal algorithm evicting the page arriving the furthest in the future, suffers
more cache misses than the best marking algorithm. Consider a cache of size 3,
and the request sequence 1, 2, 3; 4, 5, 6, 5, 6; 7, 1, 4, composed of three
phases of length three, five and three. The best marking algorithm suffers eight cache
misses, the page 4 being present in the cache for the last request. The cache of
\algname after the second phase contains 1, 5, 6, as the page 1 is given
priority over the page 4, and, at the start of the last phase, the now ancient
page 1 is evicted, so the algorithm suffers nine cache misses.
\end{proof}

\section{Limitations of the reoccurrence time predictions}
\label{sec:limitations}
In this section, we prove Theorem~\ref{thm:limitations}.
In previous works on caching~\citep{LykourisV18,Rohatgi20,AlexWei20},
the predictions are the time of the next reoccurrence to each page.
It is natural to try extending this type of predictions to
other problems, such as weighted caching. In weighted caching each page
has a weight/cost that is paid each time the page enters the
cache. However, it turns out that even with perfect predictions of
this type for weighted caching, one cannot improve upon the competitive
ratio $\Theta(\log k)$, which can already be attained without predictions~\citep{BansalBN12}. Our proof is based on a known lower bound for MTS on a so-called ``superincreasing'' metric \citep{KarloffRR94}. Following a presentation of this lower bound by~\cite{Lee18}, we modify the lower bound so that the perfect predictions provide no additional information.

We call an algorithm for weighted caching \emph{semi-online} if it is
online except that it receives in each time step, as an additional input, the
reoccurrence time of the currently requested page (guaranteed to be without
error). We prove the following result:

\begin{theorem*}[Restated Theorem~\ref{thm:limitations}]
Every randomized semi-online algorithm for weighted caching is $\Omega(\log
k)$-competitive.
\end{theorem*}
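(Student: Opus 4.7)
The plan is to adapt the classical randomized $\Omega(\log k)$ lower bound for weighted caching, which stems from the lower bound for MTS on a superincreasing metric as presented by \citet{Lee18}, to the semi-online setting. I would partition the pages into $L = \Theta(\log k)$ weight classes with weights $w_i = 2^i$ growing geometrically, with class sizes chosen so that a cache of size $k$ cannot simultaneously hold all pages. The adversary proceeds in nested phases: at level $i$ and within each enclosing phase, it presents a randomized pattern of requests over class $i$ that forces any randomized algorithm's expected cost attributable to class-$i$ pages to be $\Omega(w_i)$, while an offline algorithm amortizes class-$i$ costs down to $O(w_i / L)$ per top-level phase. Summing across the $L$ levels produces the desired $\Omega(\log k)$-competitive ratio.

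The key modification for the semi-online setting is to argue that providing exact next-request times for every page does not reveal the adversary's randomized choices. By Yao's principle, it suffices to exhibit a distribution over \emph{committed} request sequences against which every deterministic semi-online algorithm is $\Omega(\log k)$-competitive. On a committed sequence the prediction oracle's outputs are deterministic functions of that sequence, so predictions are automatically perfect. The design goal is therefore to arrange the distribution so that the prediction profile is invariant, or nearly so, under the internal randomization of the adversary: within each weight class the pages should have symmetric next-request profiles conditioned on the predictions, so the algorithm cannot use the predictions to identify which specific page will be the costly one to protect. A natural realization is to have the adversary's randomness consist of a uniform cyclic permutation of pages within each class, which does not alter the multiset of predicted next-request times but does change which page is ``the right one to keep''.

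The main obstacle is showing that the predictions, while accurate, are genuinely uninformative at the critical eviction decisions. The underlying intuition is that Belady-optimal weighted caching depends on the entire future and not just the next request time of each page: when weights differ, the right decision can be to evict a page whose next request is soon in favor of a heavier page whose next request is further away, and this requires looking beyond the one-step lookahead that the predictions effectively provide. Making this rigorous amounts to proving that at each level of the recursive construction the semi-online algorithm's information state is stochastically indistinguishable across the symmetric adversarial choices, so its conditional expected cost on level $i$ matches that of a purely online algorithm and must be $\Omega(w_i)$ per phase. Once this symmetry-based information-theoretic step is in place, the summation across levels and comparison to the offline cost proceeds along the same lines as in the standard superincreasing-metric argument, yielding $\Omega(\log k)$.
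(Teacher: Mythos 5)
Your high-level plan is the right one — superincreasing weights, a recursive/nested phase construction, Yao's principle, and an information-theoretic symmetry argument so that perfect next-arrival predictions do not help the algorithm. The paper's proof follows exactly this template. However, the specific mechanism you propose for achieving the symmetry does not work, and this is the crux of the whole theorem.

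You propose to hide the adversary's randomization by applying a uniform cyclic permutation of the pages within a weight class, noting that this preserves the \emph{multiset} of predicted next-request times. But the semi-online algorithm does not receive the multiset; it receives, for each \emph{individual} page $p$, the exact time of $p$'s next request. A cyclic permutation of the pages within a class changes which page is requested when, and the per-page predictions expose this assignment completely. At the moment of any eviction decision, the algorithm already knows the full order in which the pages of the class will next be requested, and hence knows the permutation. So the cyclic shift is not hidden at all, and the ``information state is stochastically indistinguishable'' claim you hope to prove is false for this construction: the algorithm can always keep the pages requested soonest and evict the one requested last, collapsing the lower bound.

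The paper circumvents this with a different source of hidden randomness. The adversary does not randomize \emph{which} page plays \emph{which} role; in fact each leaf $i$ of a superincreasing star has a fixed identity with weight $\tau^i$, and the request \emph{locations} are essentially deterministic. What is random is the \emph{nesting structure}: at each level, an iteration is independently either a short ``type 1'' sweep $0,1,\dots,h$ or a long ``type 2'' recursion on $\sigma_{h-1}$. Crucially, the arrival \emph{times} are engineered so that the per-page next-request times are identical under both choices: when a type 1 iteration is issued, its requests are assigned the arrival times that a freshly sampled type 2 iteration \emph{would} have produced. Thus the semi-online algorithm, despite knowing every next-request time exactly, cannot tell whether it is about to see a short burst (after which it should have vacated leaf $h$) or a long sub-phase (during which it should have parked at leaf $h$). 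This arrival-time matching is the missing idea in your proposal; without it, perfect per-page predictions reveal the adversary's coin flips and the lower bound evaporates.
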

\begin{proof}
Let $\tau>0$ be some large constant. Consider an instance of weighted caching with cache size $k$ and $k+1$ pages, denoted by the numbers $0,\dots,k$, and such that the weight of page $i$ is $2\tau^i$. It is somewhat easier to think of the following equivalent \emph{evader} problem: Let $S_k$ be the weighted star with leaves $0,1,\dots,k$ and such that leaf $i$ is at distance $\tau^i$ from the root. A single evader is located in the metric space. Whenever there is a request to page $i$, the evader must be located at some leaf of $S_k$ other than $i$. The cost is the distance traveled by the evader. Any weighted caching algorithm gives rise to the evader algorithm that keeps its evader at the one leaf that is \emph{not} currently in the algorithm's cache. The cost between the two models differs only by an additive constant (depending on $k$ and $\tau$).

For $h=1,\dots,k$ and a non-empty time interval $(a,b)$, we will define inductively a random sequence $\sigma_h=\sigma_h(a,b)$ of requests to the leaves $0,\dots,h$, such that each request arrives in the time interval $(a,b)$ and
\begin{align}\label{eq:randLbSemiOnline}
A_h\ge4\alpha_{h-1}\tau^{h}\ge \alpha_h\cdot \opt_h,
\end{align}
where $A_h$ denotes the expected cost of an arbitrary semi-online algorithm to serve the random sequence $\sigma_h$ while staying among the leaves $0,\dots,h$, $\opt_h$ denotes the expected optimal offline cost of doing so with an offline evader that starts and ends at leaf $0$, $\alpha_0=\frac{1+\tau}{4\tau}$, and $\alpha_h=1/2+\beta\log h$ for $h\ge 1$, where $\beta>0$ is a constant determined later. The inequality between the first and last term in \eqref{eq:randLbSemiOnline} implies the theorem. We will also ensure that $(0,1,\dots,h)$ is both a prefix and a suffix of the sequence of requests in $\sigma_h$.

For the base case $h=1$, the inequality is satisfied by the request sequence $\sigma_1$ that requests first $0$ and then $1$ at arbitrary times within the interval $(a,b)$.

For $h\ge 2$, the request sequence $\sigma_{h}$ consists of subsequences (iterations) of the following two types (we will only describe the sequence of request locations for now and later how to choose the exact arrival times of these requests): A \emph{type 1 iteration} is the sequence $(0,1,\dots,h)$. A \emph{type 2 iteration} is the concatenation of $\lceil\frac{\tau^{h}}{\alpha_{h-1}\opt_{h-1}}\rceil$ independent samples of a random request sequence of the form $\sigma_{h-1}$. The request sequence $\sigma_{h}$ is formed by concatenating $\lceil8\alpha_{h-1}\rceil$ iterations, where each iteration is chosen uniformly at random to be of type 1 or type 2. If the last iteration is of type 2, an additional final request at $h$ is issued. Thus, by induction, $(0,\dots,h)$ is both a prefix and a suffix of $\sigma_h$.

We next show \eqref{eq:randLbSemiOnline} under the assumption that at the start of each iteration, the iteration is of type 1 or 2 each with probability $1/2$ even when conditioned on the knowledge of the semi-online algorithm at that time. We will later show how to design the arrival times of
individual requests so that this assumption is satisfied. We begin by proving the first inequality of \eqref{eq:randLbSemiOnline}. We claim that in each iteration of $\sigma_h$, the expected cost of any semi-online algorithm (restricted to staying at the leaves $0,\dots,h$) is at least $\tau^{h}/2$. Indeed, if the evader starts the iteration at leaf $h$, then with probability $1/2$ we have a type $1$ iteration forcing the evader to vacate leaf $h$ for cost $\tau^h$, giving an expected cost of $\tau^h/2$. If the evader is at one of the leaves $0,\dots,h-1$, then with probability $1/2$ we have a type 2 iteration. In this case, it must either move to $h$ for cost at least $\tau^{h}$, or $\lceil\frac{\tau^{h}}{\alpha_{h-1}\opt_{h-1}}\rceil$ times it suffers expected cost at least $\alpha_{h-1}\opt_{h-1}$ by the induction hypothesis. So again, the expected cost is at least $\tau^{h}/2$. Since $\sigma_{h}$ consists of $\lceil8\alpha_{h-1}\rceil$ iterations, we have
\begin{align*}
A_h\ge 4\alpha_{h-1}\tau^{h},
\end{align*}
giving the first inequality of \eqref{eq:randLbSemiOnline}.

To show the second inequality of \eqref{eq:randLbSemiOnline}, we describe an offline strategy. With probability $2^{-\lceil8\alpha_{h-1}\rceil}$, all iterations of $\sigma_h$ are of type 2. In this case, the offline evader moves to leaf $h$ at the beginning of $\sigma_h$ and back to leaf $0$ upon the one request to $h$ at the end of $\sigma_h$, for total cost $2(1+\tau^h)$. With the remaining probability, there is at least one type 1 iteration. Conditioned on this being the case, the expected number of type 1 iterations is $\lceil8\alpha_{h-1}+1\rceil/2$, and the expected number of type 2 iterations is $\lceil8\alpha_{h-1}-1\rceil/2$. The offline evader can serve each type 1 iteration for cost $2(1+\tau)$ and each type 2 iteration for expected cost $\lceil\frac{\tau^{h}}{\alpha_{h-1}\opt_{h-1}}\rceil\opt_{h-1}$, and it finishes each iteration at leaf $0$. (Thus, if the last iteration is of type 2, then the final request to $h$ incurs no additional cost.) By the induction hypothesis, $\opt_{h-1}\le O(\tau^{h-1})$. Hence, we can rewrite the expected cost of a type 2 iteration as
\begin{align*}
\left\lceil\frac{\tau^{h}}{\alpha_{h-1}\opt_{h-1}}\right\rceil\opt_{h-1}= (1+o(1))\frac{\tau^h}{\alpha_{h-1}},
\end{align*}
as $\tau\to\infty$. Since $h\ge 2$, the expected cost of all type 1 iterations is only an $o(1)$ fraction of the expected cost of the type 2 iterations.
Overall, we get
\begin{align*}
\opt_h&\le 2^{-\lceil8\alpha_{h-1}\rceil}2(1+\tau^h) +\\ &~~~~~(1+o(1))\left(1-2^{-\lceil8\alpha_{h-1}\rceil}\right)\frac{\lceil8\alpha_{h-1}-1\rceil}{2}\frac{\tau^h}{\alpha_{h-1}}\\
&\le (1+o(1))\left[2^{-\lceil8\alpha_{h-1}\rceil}2\tau^h + \left(1-2^{-\lceil8\alpha_{h-1}\rceil}\right)4\tau^h\right]\\
&= (1+o(1))\left(1-2^{-\lceil8\alpha_{h-1}\rceil-1}\right)4\tau^h\\
&\le \frac{4\tau^h}{1+2^{-\lceil8\alpha_{h-1}\rceil-1}}.
\end{align*}
We obtain the second inequality in \eqref{eq:randLbSemiOnline} by
\begin{align*}
\frac{4\alpha_{h-1}\tau^{h}}{\opt_h}&\ge \alpha_{h-1}\left(1+2^{-\lceil8\alpha_{h-1}\rceil-1}\right)\\
&\ge \frac{1}{2}+\beta\log (h-1)+2^{-8\beta\log(h-1)-7}\\
&\ge \frac{1}{2}+\beta\log (h-1)+\frac{\beta}{h-1}\\
&\ge \frac{1}{2}+\beta\log h\\
&=\alpha_h,
\end{align*}
where the third inequality holds for $\beta=2^{-7}$.

It remains to define to define the arrival times for the requests of
sequence $\sigma_h$ within the interval $(a,b)$. We do this as
follows: Let $m\ge 1$ be the number of requests to leaf $h$ in
$\sigma_h$. These requests to $h$ will be issued at times
$a+(b-a)\sum_{i=1}^j 2^{-i}$ for $j=1,\dots,m$.

To define the arrival times of the other requests, we will maintain a time variable $c\in[a,b)$ indicating the \emph{current} time, and a variable $n>c$ indicating the time of the \emph{next} request to leaf $h$ after time $c$. Initially, $c:=a$ and $n:=(a+b)/2$. Consider the first iteration for which the arrival times have not been defined yet. If the iteration is of type 2, we choose the arrival times according to the induction hypothesis so that all subsequences $\sigma_{h-1}$ within the iteration fit into the time window $(c,(c+n)/2)$, and we update $c:=(c+n)/2$. If the iteration is of type 1, sample a type 2 iteration $I$ and let $t_1,\dots,t_{h-1}$ be such that $t_i$ would be the time of the next request to page $i$ if the next iteration were this iteration $I$ of type 2 instead of a type 1 iteration. We define the arrival times of the (single) request to leaf $i<h$ in this type 1 iteration to be $t_i$. If this was not the last iteration, we update $c:=n$ and increase $n$ to the time of the next request to $h$ (as defined above).

Notice that at the beginning of each iteration within $\sigma_h$,
ordering the pages by the time of their next request always yields the
sequence $0,1,\dots,k$, and the time of the next request to each page
is independent of whether the next iteration is of type 1 or type
2. Thus, as promised, whether the next iteration is of type 1 or type 2 is independent of the knowledge of the semi-online algorithm.\end{proof}

\end{document}